\documentclass[preprint,12pt]{elsarticle}

\usepackage{amsmath,amssymb,amsthm,amsfonts,latexsym,graphicx,subfigure}
\usepackage{mathrsfs}
\usepackage[dvips]{color}
\usepackage{indentfirst}
\usepackage{fancyhdr}
\usepackage{bbm}
\usepackage{accents}

\biboptions{sort&compress}

\makeatletter
\def\wbar{\accentset{{\cc@style\underline{\mskip8mu}}}}
\makeatother

\makeatletter
\def\wbar{\accentset{{\cc@style\underline{\mskip8mu}}}}

\makeatother

\renewcommand{\vec}[1]{\mbox{\boldmath \small $#1$}}

\def\mi{\mathtt{i}}
\def\numzero{\#^z}
\def\numextr{\#^e}
\def\sech{\mathrm{sech}} %added by xj
\def\me{\mathrm{e}} %added by xj
\newcommand{\pp}[2]{\frac{\partial{#1}}{\partial{#2}}}
\newcommand{\dd}[2]{\frac{\dif{#1}}{\dif{#2}}}% added by xj
\newcommand{\bmb}[1]{\left(#1\right)}
\newcommand{\bbmb}[1]{\left[#1\right]}

\def\dif{\mathrm{d}}

\newcommand{\diag}{\mathrm{diag}}
\newcommand{\imag}{\mathrm{Im}}
\newcommand{\real}{\mathrm{Re}}

%%%%%%%%%%%%%%%% added by xj%%%%%%%%%%%%%%%%%%%
\newtheorem{Theorem}{Theorem}

\newtheorem{Lemma}[Theorem]{Lemma}
\newtheorem{Remark}{Remark}
\theoremstyle{definition}
 %[section]
% \numberwithin{equation}{section}
%%%%%%%%%%%%%%%%%%%%%%%%%%%%%%%%%%%%%%%%%%%%%%%

%%%%%%%%%%%%%%%%%%%%%%%%%%%%%%%%%%%%%%%%%%%%%%%
\newcommand{\ansatz}{\textit{ansatz}{ }}
\newcommand{\videpost}{\textit{vide post}{}}
\newcommand{\etc}{\textit{etc}{}}

\newcommand{\ie}{\textit{i.e.}{~}}
\newcommand{\eg}{\textit{e.g.}{~}}

\newcommand{\vgamma}{\vec{\gamma}}

\newcommand{\vLambda}{\vec{\Lambda}}
\newcommand{\vPsi}{\vec{\Psi}}

\newcommand{\vpsi}{\vec{\psi}}

\newcommand{\vS}{\vec{S}}

%\journal{Physica D}

\begin{document}

\begin{frontmatter}

\title{Multi-hump solitary waves of nonlinear Dirac equation}
\author[ss]{Jian Xu}

\author[ss]{Sihong Shao\corref{cor}}
\ead{sihong@math.pku.edu.cn}

\author[thz]{Huazhong Tang}

\author[wdy]{Dongyi Wei}

\cortext[cor]{Corresponding author.}

\address[ss]{LMAM and School of Mathematical Sciences, Peking University,
Beijing 100871, China}
\address[thz]{HEDPS, CAPT \& LMAM, School of Mathematical Sciences, Peking University,
Beijing 100871, China}
\address[wdy]{School of Mathematical Sciences, Peking University,
Beijing 100871, China}
\begin{abstract}
This paper concentrates on a (1+1)-dimensional nonlinear Dirac (NLD)
equation with a general self-interaction, being a linear combination
of the scalar, pseudoscalar, vector and axial vector
self-interactions to the power of the integer $k+1$. The solitary
wave solutions to the NLD equation are analytically derived, and the
upper bounds of the hump number in the charge, energy and momentum
densities for the solitary waves are proved in theory. The results
show that: (1) for a given integer $k$, the hump number in the
charge density is not bigger than $4$, while that in the energy
density is not bigger than $3$; (2) those upper bounds can only be
achieved in the situation of higher nonlinearity, namely,
$k\in\{5,6,7,\cdots \}$ for the charge density and
$k\in\{3,5,7,\cdots\}$ for the energy density; (3) the momentum
density has the same multi-hump structure as the energy density; (4)
more than two humps (resp. one hump) in the charge (resp. energy)
density can only happen under the linear combination of the
pseudoscalar self-interaction and at least one of the scalar and
vector (or axial vector) self-interactions. Our results on the
multi-hump structure will be interesting in the interaction dynamics
for the NLD solitary waves.
\end{abstract}
\begin{keyword}
Nonlinear Dirac equation \sep Solitary wave \sep Multi-hump \sep
Self-interaction
\end{keyword}
\end{frontmatter}

\section{Introduction}
\label{sec:intro}

There is a remarkable upsurge of interest in the nonlinear Dirac
(NLD) models or equations, as they emerge naturally as practical
models in many physical systems, such as the extended particles in
particle physics
\cite{Ivanenko1938,FinkelsteinLelevierRuderman1951,FinkelsteinFronsdalKaus1956,Heisenberg1957},
the gap solitons in nonlinear optics \cite{Barashenkov1998},
Bose-Einstein condensates in honeycomb optical lattices
\cite{Haddad2009}, phenomenological models of quantum chromodynamics
\cite{Fillion-Gourdeau2013}, as well as matter influencing the
evolution of the Universe in cosmology \cite{Saha2012}. To make the
resulting NLD model to be Lorentz invariable, the so-called
self-interaction Lagrangian can be built up from the bilinear
covariants which are categorised into five types: scalar,
pseudoscalar, vector, axial vector and tensor. Different
self-interactions give rise to different NLD models. Several
interesting models have been proposed and investigated based on the
scalar bilinear covariant in
\cite{Gursey1956,Soler1970,GrossNeveu1974,Mathieu1984}, on the
vector bilinear covariant in \cite{Thirring1958}, on the axial
vector bilinear covariant \cite{Mathieu1983}, on both scalar and
pseudoscalar bilinear covariants \cite{RanadaRanada1984}, on both
the scalar and vector bilinear covariants
\cite{Stubbe1986-jmp,NogamiToyama1992} \etc. All of these models
have attracted wide interest of physicists and mathematicians,
especially on looking for solitary wave solutions and investigating
their physical and mathematical properties.

A key feature of the NLD equations is that
it allows solitary wave solutions or particle-like solutions --- the stable localized solutions
with finite energy and charge \cite{Ranada1983}.
That is, the particles appear as intense localized regions of field
which can be recognized as the basic ingredient
in the description of extended objects in quantum field theory \cite{Weyl1950}.
For the NLD equation in (1+1) dimensions (\ie one time dimension plus one space dimension),
several analytical solitary wave solutions are derived
in \cite{LeeKuoGavrielides1975} %,ChangEllisLee1975}
for the quadric nonlinearity,
\cite{Mathieu1985-jpa-mg,Mathieu1985-prd} for fractional
nonlinearity as well as
\cite{Stubbe1986-jmp,CooperKhareMihailaSaxena2010} for general
nonlinearity by using explicitly the constraint resulting from the
energy-momentum conservation. Most existing studies on the NLD
solitary waves focus on the situation with the self-interaction
Lagrangian constructed from one single bilinear covariant. For
example, the Soler model \cite{Soler1970} and the Thirring model
\cite{Thirring1958} involve respectively the quadric scalar
self-interaction and the quadric vector self-interaction, and
further discussion about extending such two models to the situation
with the integer nonlinearity is recently presented in
\cite{CooperKhareMihailaSaxena2010,MertensQuinteroCooper2012}.

With the help of the analytical expressions of the NLD solitary wave solutions,
the interaction dynamics among them can further be conveniently studied
and rich nonlinear phenomena have been revealed in a series of work \cite{AlvarezCarreras1981,ShaoTang2005,ShaoTang2006,ShaoTang2008,XuShaoTang2013}.
An important step in this direction has been made by
Alvarez and Carreras \cite{AlvarezCarreras1981},
who simulated numerically the interaction dynamics
for the (1+1)-dimensional NLD solitary waves under the quadric scalar self-interaction (\videpost)
\cite{AlvarezKuoVazquez1983}.
Shao and Tang have revisited this interaction dynamics problem \cite{ShaoTang2005}
by employing a higher-order accurate method.
They not only reproduced the phenomena observed by Alvarez and Carreras but also observed
that collapse happens in binary and ternary collisions of two-hump NLD solitary waves.
Very recently, they have further investigated the interaction dynamics for the NLD solitary
waves under the linear combination of scalar and vector self-interactions with the integer nonlinearity
and revealed that the interaction dynamics depend on the power exponent
 of the self-interaction in the NLD equation,
for example,
collapse happens again after collision of two equal one-hump NLD solitary waves
under the cubic vector self-interaction in contrast to no collapse scattering
for corresponding quadric case \cite{XuShaoTang2013}.
Their numerical results inferred that both the multi-hump (two-hump) profile and
high order nonlinearity could undermine
the stability during the scattering of the NLD solitary waves.
Note in passing that,
though multi-hump solitary waves have been found for many other nonlinear models,
see \eg \cite{OstrovskayaKivsharSkryabinFirth1999} and references therein,
the detailed study of the multi-hump solitary waves of the NLD model lacks except
for the two-hump structure first pointed out by Shao and Tang \cite{ShaoTang2005}
and later gotten noticed by other researchers, see \eg \cite{CooperKhareMihailaSaxena2010}.

In (1+1) dimensions,
the pseudoscalar and tensor bilinear covariants are linearly dependent,
and a direct generalization of self-interaction
is linearly combining the scalar,
pseudoscalar, vector and axial vector bilinear covariants with arbitrary nonlinearity,
called by the linear combined self-interaction (see Eq.~\eqref{generalLI}).
A natural question is raised here:
How about the interaction dynamics
for the (1+1)-dimensional NLD solitary waves under such linear combined self-interaction?
In answering the question,
on the one hand,
efficient and stable numerical methods are necessary in order to solve accurately the NLD equation with
the linear combined self-interaction in long time simulations.
Actually, we have demonstrated that
both the Runge-Kutta discontinuous Galerkin method and
the exponential operator splitting method are fit for the job \cite{ShaoTang2006,XuShaoTang2013}.
On the other hand,
more detailed information on the
physical and mathematical properties
of the NLD solitary waves under the linear combined self-interaction
is essential to investigating the interaction dynamics.
The present work will focus on studying these properties and
try to answer questions such as:
How to choose the coefficients of the linear combined self-interaction
to make the NLD model be physically significant and have solitary wave solutions?
What parameters does the multi-hump structure depend on?
Is the hump number related to the power exponent of the self-interaction?

The paper is organized as follows. The NLD
equation with the linear combined self-interaction
is introduced in Section \ref{sec:nld} and
the range of the combination coefficients
is also determined there by the Hermiticity requirement of the self-interaction.
In Section \ref{sec:solitary},
the localized solitary wave solutions are analytically derived with the help of
the conservation laws.
The multi-hump structure of the charge, energy and momentum densities is analyzed in Section \ref{sec:multi-hump}
where the upper bounds of the hump number in those densities are proved in theory.
The paper is concluded in Section \ref{sec:conclusion} with a few remarks.

\section{Nonlinear Dirac equation}
\label{sec:nld}

This section will introduce the (1+1)-dimensional NLD equation
with the linear combined self-interaction. Throughout the paper,
units in which both the speed of light and the reduced Planck constant are equal
to one will be used,
and the Einstein summation convection will be applied,
namely, summation over repeated indices.
The NLD equation has the following general covariant form
\begin{equation}
(\mi\vgamma^\mu\partial_\mu-m)\vPsi+\pp{L_{\text{I}}}{\wbar{\vPsi}}=0, \label{generalnld}
\end{equation}
being the Euler-Lagrange equation
$\partial_\mu \big(\partial{L}/\partial (\partial_\mu\wbar{\vPsi})\big) -
\partial{{L}}/\partial{\wbar{\vPsi}}=0$,
where
the spinor $\vPsi$ has two complex components,
$\wbar{\vPsi}=\vPsi^\dag\vgamma^0$ with superscript $\dag$ denoting the conjugate transpose,
$\vgamma^\mu$ are Gamma matrices (we choose $\vgamma^0=\sigma^z$ and $\vgamma^1=\mi\sigma^y$ as did in \cite{AlvarezCarreras1981,ShaoTang2005} where $\sigma^{x,y,z}$ are the standard Pauli matrices) and $\partial_\mu=\pp{}{x^\mu}$ for $\mu =0,1$,
$\mi$ is the imaginary unit,
$m\geq 0$ is the mass (the NLD model is called massive if $m > 0$ and massless if $m=0$),
and the Lagrangian $L$ reads
\begin{equation}\label{total_lag}
{L}=L_{\text{D}}+{L}_{\text{I}}.
\end{equation}
Here $L_\text{D}$ denotes the Dirac Lagrangian given by %and has the form
\begin{equation} \label{L_D}
L_{\text{D}}=\frac{\mi}2\big(\wbar{\vPsi}\vgamma^\mu\partial_\mu\vPsi - (\partial_\mu\wbar{\vPsi})
\vgamma^\mu\vPsi\big)-m\wbar{\vPsi}\vPsi,
\end{equation}
and the self-interaction Lagrangian ${L}_{\text{I}}$ is
a nonlinear functional of the spinors $\vPsi$ and $\wbar{\vPsi}$ but independent of $\partial_\mu\wbar{\vPsi}$,
\eg a general linear combined self-interaction in \eqref{generalLI} will be considered in this work.
Physically,
the self-interaction Lagrangian ${L}_{\text{I}}$ is not only required to be invariant under the Lorentz transformation
(see Eq.~\eqref{eq:boost2}) but also should be carefully chosen such that the resulting solution $\vPsi$ to the NLD equation \eqref{generalnld} satisfies
the following conservation laws
\begin{align}
\partial_\mu j^\mu &=0, \label{jconserve}\\
\partial_\mu{T}^{\mu\nu} &=0, \label{Tconserve}
\end{align} % $\partial^\mu=\eta^{\mu\nu}\partial_\nu$ with $\eta_{\mu\nu}=\diag(1,-1)$ being the Minkowski metric,
where the current vector $j^\mu$ and the energy-momentum tensor $T^{\mu\nu}$ are defined respectively as
\begin{align}
j^\mu &= \wbar{\vPsi}\vgamma^\mu\vPsi, \label{jvector}\\
T^{\mu\nu} &=\frac{\mi}2\big(\wbar{\vPsi}\vgamma^\mu\partial^\nu\vPsi
-(\partial^\nu\wbar{\vPsi})\vgamma^\mu\vPsi\big)-\eta^{\mu\nu}{L}. \label{Ttensor}
\end{align}
Here $\partial^\mu=\eta^{\mu\nu}\partial_\nu$ and $\eta^{\mu\nu}=\eta_{\mu\nu}=\diag(1,-1)$ is the Minkowski metric.
Eq.~\eqref{jconserve} corresponds to the mass conservation,
while Eq.~\eqref{Tconserve} gives
the energy conservation for $\nu=0$
and the momentum conservation for $\nu=1$.
According to Eqs.~\eqref{jconserve}, \eqref{Tconserve},
\eqref{jvector} and \eqref{Ttensor},
for the localized spinor $\vPsi$,
integrating the zero components of the current vector $j^\mu$
and the energy-momentum tensor ${T}^{\mu\nu}$ yields
three conserved quantities, \ie the charge $Q$, the energy $E$, and the momentum $P$,
as follows
\begin{align}
Q &= \int_{-\infty}^{+\infty} j^0 \dif x,
\label{Q-2C}\\
E &= \int_{-\infty}^{+\infty} {T}^{00} \dif x,
\label{E-2C}\\
P &= \int_{-\infty}^{+\infty} {T}^{01} \dif x.
\label{P-2C}
\end{align}

As we have mentioned in Section~\ref{sec:intro},
the self-interaction Lagrangian ${L}_{\text{I}}$
can be built up from the bilinear covariants
and several NLD models exist in the literature corresponding to different bilinear covariants.
There are five types of bilinear covariants:
the scalar bilinear covariant is $\wbar{\vPsi}\vPsi$,
the pseudoscalar bilinear covariant is $\wbar{\vPsi}\vgamma^5\vPsi$,
the vector bilinear covariant is $\wbar{\vPsi}\vgamma^\mu\vPsi$,
the axial vector bilinear covariant is $\wbar{\vPsi}\vgamma^\mu\vgamma^5\vPsi$,
and
the tensor bilinear covariant is
$\frac{\mi}2 \wbar{\vPsi}(\vgamma^\mu\vgamma^\nu-\vgamma^\nu\vgamma^\mu)\vPsi$,
where $\vgamma^5 = \vgamma^0\vgamma^1$.
In (1+1) dimensions,
it can easily be shown that the tensor and pseudoscalar bilinear covariants are linearly dependent,
\eg $\frac{\mi}2 \wbar{\vPsi}(\vgamma^1\vgamma^0-\vgamma^0\vgamma^1)\vPsi=-\mi \wbar{\vPsi}\vgamma^5\vPsi$,
and then the remaining four bilinear covariants are used to
construct the following self-interactions
\begin{align}
L_{\text{S}} &= \wbar{\vPsi}\vPsi= |\Psi_1|^2 - |\Psi_2|^2 \in\mathbb{R}, \label{S_2c}\\
L_{\text{P}} &= -\mi\wbar{\vPsi}\vgamma^5\vPsi=2 \imag (\Psi_1^\ast\Psi_2) \in\mathbb{R}, \label{iP_2c}\\
L_{\text{V}} &= \wbar{\vPsi}\vgamma^\mu\vPsi \wbar{\vPsi}\vgamma_\mu\vPsi, \label{V_mu}\\
L_{\text{A}} &= \wbar{\vPsi}\vgamma^\mu\vgamma^5\vPsi \wbar{\vPsi}\vgamma_\mu\vgamma^5\vPsi, \label{A_mu}
\end{align}
where $\Psi_1$ and $\Psi_2$ are two components of the spinor $\vPsi$,
the superscript $\ast$ denotes the complex conjugate,
and $\vgamma_\mu=\eta_{\mu\nu}\vgamma^\nu$ are the covariant Gamma matrices.
Further more, direct calculation shows
the relation between $L_{\text{V}}$ and $L_{\text{A}}$
\begin{equation}\label{L_V}
    L_{\text{V}} = - L_{\text{A}} =\bmb{|\Psi_1|^2 + |\Psi_2|^2}^2 - \bmb{2 \real(\Psi_1^\ast\Psi_2)}^2 \geq 0,
\end{equation}
and thus the general linear combined self-interaction can be formally written as %\cite{LeeKuoGavrielides1975,Ranada1977,Stubbe1986-jmp}
\begin{equation}\label{generalLI}
L_\text{I} = s(L_{\text{S}}) ^{k+1} + p(L_{\text{P}}) ^{k+1} + v (L_{\text{V}})^{\frac12(k+1)},
\end{equation}
where the exponent power $k+1$ is an integer, and the linear combination coefficients $s,p,v$
should be carefully chosen such that the resulting NLD models are physically meaning (\videpost).
For some special choice of the parameters $k,s,p,v$, Eq.~\eqref{generalLI} will reduce
to the often-cited NLD models in literature
such as the Thirring model \cite{Thirring1958,Coleman1975}
and the Soler or Gross-Neveu model \cite{Soler1970,GrossNeveu1974}.
If the spinor $\vPsi$ is scaled by a constant factor as $\widetilde{\vPsi}=\sqrt{\alpha}\vPsi$
with $\alpha\in\mathbb{C}$,
then the scaled self-interaction Lagrangian will be
$\alpha^{k+1}L_\text{I}[\vPsi]$ which shows that the power exponent to $\alpha$ is $k+1$.
In such sense, we call that the self-interaction Lagrangian $L_\text{I}$ has the power exponent $k+1$,
for example, the quadric and cubic self-interaction  are referred to the case $k=1$
and the case $k=2$, respectively.
The linear combination of the quadric scalar and quadric pseudoscalar self-interactions
has been studied in \cite{LeeKuoGavrielides1975,RanadaRanada1984},
while the linear combination of the scalar and vector self-interactions
with a general power exponent has been considered in \cite{Stubbe1986-jmp,XuShaoTang2013}.
The linear combined self-interaction \eqref{generalLI} with $k=1$
has also been mentioned in \cite{Ranada1977}.

Obviously,
the linear combined self-interaction ${L}_{\text{I}}$ in \eqref{generalLI} is Lorentz invariant
because each of $L_{\text{S}}, L_{\text{P}}, L_{\text{V}}$ is
invariant under the Lorentz transformation.
Accordingly,
the only remaining physical requirement is
to choose the linear combination coefficients in \eqref{generalLI}
such that the resulting NLD spnior $\vPsi$
satisfies the conservation laws \eqref{jconserve} and \eqref{Tconserve}.
It can readily be shown that
the linear combined self-interaction \eqref{generalLI}
satisfies the so-called homogeneity relation \cite{Mathieu1985-jpa-mg,Stubbe1986-jmp}
\begin{equation}
\wbar{\vPsi}\pp{{L}_{\text{I}}}{\wbar{\vPsi}}=(k+1){L}_{\text{I}}, \quad
\bmb{\pp{{L}_{\text{I}}}{\wbar{\vPsi}}}^\dag\vgamma^0\vPsi=(k+1){L}_{\text{I}}^\ast. \label{ppL=L}
\end{equation}
Combining Eqs.~\eqref{generalnld}, \eqref{jvector} and \eqref{ppL=L}
gives that the conservation law \eqref{jconserve} is equivalent to
the Hermiticity of the linear combined self-interaction \eqref{generalLI},
\ie
\begin{equation}
  \label{eq:LI-real}
  L_{\text{I}} = L_{\text{I}}^\ast,
\end{equation}
which poses a requirement the self-interaction \eqref{generalLI} must satisfy.
Multiplying Eq.~\eqref{generalnld} from left with $\wbar{\vPsi}$
plus the conjugate transpose of Eq.~\eqref{generalnld} multiplied with $\vgamma^0{\vPsi}$ from right
yields
\begin{equation*}
2 L_\text{D} + \wbar{\vPsi}\pp{{L}_{\text{I}}}{\wbar{\vPsi}}+
\bmb{\pp{{L}_{\text{I}}}{\wbar{\vPsi}}}^\dag\vgamma^0\vPsi=0,
\end{equation*}
for Eq.~\eqref{L_D},
and then we have a useful relation
\begin{equation} \label{LLIrelation}
{L} = -k {L}_{\text{I}}.
\end{equation}
for Eqs.~\eqref{total_lag}, \eqref{ppL=L} and \eqref{eq:LI-real}.
In consequence, combining the homogeneity relation \eqref{ppL=L},
the relation \eqref{LLIrelation} between $L$ and $L_{\text{I}}$
as well as the Hermiticity requirement \eqref{eq:LI-real},
and direct algebraic calculation leads to the conservation law \eqref{Tconserve}.
That is,
the Hermiticity requirement \eqref{eq:LI-real}
is the only condition for making
the NLD spinor $\vPsi$ under the linear combined self-interaction \eqref{generalLI}
follows the conservation laws \eqref{jconserve} and \eqref{Tconserve}.
Below we will use the Hermiticity requirement \eqref{eq:LI-real}
to choose the linear combination coefficients in \eqref{generalLI}.
Before that,
we would like to make a remark that
the cases of $k=-1$ and $k=0$ will not be considered in the following
because the NLD equation \eqref{generalnld} degenerates to the linear Dirac equation
when $k=-1$ according to Eq.~\eqref{ppL=L},
and the Lagrangian $L$ vanishes (\ie $L\equiv 0$)
when $k=0$ for the relation \eqref{LLIrelation}.

The Hermiticity condition \eqref{eq:LI-real} implies
\begin{equation}
  \label{eq:constraint-spva}
  (s-s^\ast) (L_{\text{S}})^{k+1}+ (p-p^\ast) (L_{\text{P}})^{k+1}+
(v-v^\ast)(L_{\text{V}})^{\frac12(k+1)}
 =0,
\end{equation}
for $k\in\mathbb{Z}\setminus \{-1,0\}$ and
$L_{\text{S}}, L_{\text{P}}, L_{\text{V}}$
are all real as shown in Eqs.~\eqref{S_2c}, \eqref{iP_2c} and \eqref{L_V}.
In particular, for the quadric case (\ie $k=1$),
Eq.~\eqref{eq:constraint-spva}
further reduces to
\begin{equation}\label{eq:constraint-spva-k=1}
\bmb{s-p-s^\ast+p^\ast}(L_{\text{S}})^2 + \bmb{v+p-v^\ast-p^\ast} L_{\text{V}}=0,
\end{equation}
on account of $(L_{\text{P}})^2 = L_{\text{V}} - (L_{\text{S}})^2$ \cite{Ranada1977}.
Because of the arbitrariness of the NLD spinor $\vPsi$,
Eq.~\eqref{eq:constraint-spva-k=1} implies that both $s-p$ and $v+p$ must be real when $k=1$,
otherwise $s, p, v$ must all be real for $k\in\mathbb{Z} \setminus\{0, \pm 1\}$.
The range of the parameters $\{s, p, v\}$
in the linear combined self-interaction \eqref{generalLI}
with a given integer power exponent $k+1$ reads as follows
\begin{equation}\label{ek}
\mathcal{E}_k :=
\begin{cases}
  \{(s,p,v)| s-p\in\mathbb{R},
  v+p\in\mathbb{R},
  |s-p| + |v+p| \neq 0\} &\text{for}\quad k=1,\\
   \{(s,p,v)| s\in\mathbb{R}, p\in\mathbb{R},
  v\in\mathbb{R}, |s|+|p|+|v|\neq 0\}&\text{for}\quad
 k\in\mathbb{Z}\setminus \{0,\pm 1\},
\end{cases}
\end{equation}
where the coefficients with which $L_\text{I}\equiv 0$ holds have been excluded.

In the next section,
for $k\in\mathbb{Z}\setminus \{-1,0\}$,
we are going to look for the localized solitary wave solutions
for the NLD equation \eqref{generalnld} with the linear combined self-interaction \eqref{generalLI}
of a given integer power exponent $k+1$
under that the linear combination coefficients in Eq.~\eqref{generalLI} belong to $\mathcal{E}_k$ in Eq.~\eqref{ek}.

\section{Solitary wave solutions}
\label{sec:solitary}

This section will focus on
seeking the localized solutions of the following form
for the (1+1)-dimensional NLD equation \eqref{generalnld} with \eqref{generalLI}
in the spirit of the methods used in \cite{LeeKuoGavrielides1975,ChangEllisLee1975,Mathieu1985-jpa-mg,Stubbe1986-jmp}.
%\shao{
%It should be noted that
%the method employed below for deriving the localized solutions of the NLD equation
%was originally proposed in \cite{ChangEllisLee1975,LeeKuoGavrielides1975}.}
The solution with the form
\begin{equation}\label{localsol1}
\vPsi(x, t)
=\me^{-\mi \omega t} \vpsi(x),
\quad
\vpsi(x) =
\left(
\begin{matrix}
\varphi(x)\\
\chi(x)
\end{matrix}
\right)
\end{equation}
is wanted,
where $\omega \geq 0$ is the frequency,
and
both $|\varphi(x)|$ and $|\chi(x)|$ are required to
decay very fast to zero as $|x|\rightarrow +\infty$.
Such solution is said to be localized in $\mathbb{R}$ as mentioned before.
Substituting the \ansatz \eqref{localsol1} into the Lagrangian \eqref{total_lag}
and the energy-momentum tensor \eqref{Ttensor} gives respectively
\begin{align}
{L} &= \omega \wbar{\vpsi}\vgamma^0\vpsi
+\frac{\mi}2(\wbar{\vpsi}\vgamma^1\partial_x\vpsi - (\partial_x\wbar{\vpsi})\vgamma^1\vpsi)-m\wbar{\vpsi}\vpsi+{L}_{\text{I}},
\label{total_lag_ti}\\
{T}^{00} &= - \frac{\mi}2 (\wbar{\vpsi}\vgamma^1\partial_x\vpsi -(\partial_x\wbar{\vpsi})\vgamma^1\vpsi)
+m\wbar{\vpsi}\vpsi-{L}_{\text{I}}, \label{t_00-sw} \\
{T}^{01} &= - \frac{\mi}2(\wbar{\vpsi}\vgamma^0\partial_x\vpsi -(\partial_x\wbar{\vpsi})\vgamma^0\vpsi),  \label{t_01-sw}\\
{T}^{10} &= \omega \wbar{\vpsi}\vgamma^1\vpsi, \label{t_10-sw}\\
{T}^{11} &= \omega \wbar{\vpsi}\vgamma^0\vpsi-m\wbar{\vpsi}\vpsi+{L}_{\text{I}}, \label{t_11-sw}
\end{align}
all of which are independent of the time $t$ in this moment,
and thus the conservation law \eqref{Tconserve} becomes
\begin{equation*}
\frac{\dif{T}^{10}}{\dif x}=\frac{\dif{T}^{11}}{\dif x}=0,
\end{equation*}
which implies that
\begin{equation}\label{t00=t11=0}
{T}^{10}={T}^{11}=0
\end{equation}
for the localized solutions \eqref{localsol1}, \ie
\begin{align}
\omega \wbar{\vpsi}\vgamma^1\vpsi&=0, \label{t10} \\
\label{key-Li}
\omega \vpsi^\dag\vpsi- m\wbar{\vpsi}\vpsi + {L}_{\text{I}}&=0.
\end{align}
To ensure Eq.\ \eqref{t10}, we require
\begin{equation}\label{imag}
\wbar{\vpsi}\vgamma^1\vpsi%=\vpsi^\dag\vgamma^0\vgamma^1\vpsi=\vpsi^\dag \vgamma^5 \vpsi
=\varphi^\ast\chi+\varphi\chi^\ast=0.
\end{equation}
That is, $\varphi^\ast\chi$ is imaginary. To this end,
we assume that $\varphi$ is real and $\chi$ is imaginary as follows
\begin{equation}
\left(
\begin{matrix}
\varphi(x)\\
\chi(x)
\end{matrix}
\right)=
R(x) \left( \begin{matrix}
  \cos\bmb{\theta(x)} \\  \mi \sin\bmb{\theta(x)}\end{matrix}\right),\label{phaseexp}
\end{equation}
where both $R(x)$ and $\theta(x)$ are under-determined real functions.
Only the classical solutions are considered below
and both $R(x)$ and $\theta(x)$ at least belong to $C^1(\mathbb{R})$
which consists of all differentiable functions in $\mathbb{R}$ whose derivative is continuous.
Meanwhile,
we assume that for any $x\in\mathbb{R}$,
the charge density $j_0(x)$ does not vanish,
that means,
the particle described by the NLD spinor $\vPsi$ in Eq.~\eqref{localsol1}
has a positive probability to go anywhere in $\mathbb{R}$.
Under such assumption,
according to Eq.~\eqref{jvector},
for any $x\in\mathbb{R}$,
we have $R(x)\neq 0$ for
\begin{equation}
\rho_Q(x) := j^0[\vPsi](x,t) = \vpsi^\dag(x)\vpsi(x) =  (R(x))^2>0, \label{j0R2}
\end{equation}
where the spinor $\vPsi$ is given in Eq.~\eqref{localsol1}
and the notation $\rho_Q(x)$ denoting the charge density has been
introduced for convenience.
Moreover,
physically, the charge $Q$ defined in Eq.~\eqref{Q-2C} is required to be finite,
\ie $Q=\int_{-\infty}^{+\infty} \rho_Q(x) \dif x < +\infty$.
Note in passing that substituting Eq.~\eqref{phaseexp} into Eq.~\eqref{t_01-sw}
directly leads to
\begin{equation}\label{T01=0}
T^{01} = 0
\end{equation}
which means the momentum density vanishes for all $x\in\mathbb{R}$.
Further substituting Eqs.~\eqref{localsol1} and \eqref{phaseexp} into Eqs.~\eqref{S_2c},
\eqref{iP_2c} and \eqref{V_mu} leads to, respectively,
\begin{align}
L_{\text{S}}      &= \wbar{\vpsi}(x) \vpsi(x) = (R(x))^2\cos(2\theta(x)), \label{S_theta}\\
L_{\text{P}}      &=  (R(x))^2\sin(2\theta(x)), \label{P_theta} \\
L_{\text{V}}      &= (R(x))^4, \label{VV_theta}
\end{align}
and then the linear combined self-interaction \eqref{generalLI} becomes
\begin{equation}\label{generalLI-3}
L_\text{I} = (R(x))^{2(k+1)} G(x),
\end{equation}
where
\begin{equation}\label{G(theta)}
G(x):=s (\cos\bmb{2\theta(x)})^{k+1} + {p} (\sin\bmb{2\theta(x)})^{k+1} + {v},
\end{equation}
are introduced for convenience.

Combining Eqs.~\eqref{LLIrelation}, \eqref{total_lag_ti} and \eqref{key-Li} yields
\begin{equation}
k \omega \vpsi^\dag\vpsi-k m\wbar{\vpsi}\vpsi
-\frac{\mi}2 (\wbar{\vpsi}\vgamma^1 \vpsi_x -\wbar{\vpsi}_x\vgamma^1 \vpsi)=0,
\label{key-eq1}
\end{equation}
which does not depend on the particular type of the self-interaction involved and could be solved analytically.
Substituting the \ansatz \eqref{phaseexp} into \eqref{key-eq1} gives rise to
the ordinary differential equation
\begin{equation}
\frac{\dif \theta(x)}{m\cos\bmb{ 2\theta(x)}-\omega} = k \dif x, \label{theta-recast1}
\end{equation}
under the condition of $m\cos\bmb{ 2\theta(x)}-\omega\neq 0$,
otherwise $\theta(x)=\frac12\cos^{-1}\frac{\omega}{m}$. According to the integral formula
\begin{equation}
\int_{u_0}^{u} \frac{\dif \theta}{a+b \cos (c\theta)} \quad (c\neq 0)=
\begin{cases}
  {\frac{2}{c \sqrt{b^2-a^2}} \tanh^{-1}\left(\frac{(b-a)}{\sqrt{b^2-a^2}}
      \tan \left(\frac{c u}{2}\right)\right)}, & |b|>|a|, \\
  -\frac{1}{a c}\cot \left(\frac{c u}{2}\right), & b=-a\neq 0,\\
  {\frac{2}{c \sqrt{a^2-b^2}} \tan^{-1}\left(\frac{(a-b)}{\sqrt{a^2-b^2}}
      \tan \left(\frac{c u}{2}\right)\right)}, & |b|<|a|, \\
  \frac{1}{a c} \tan \left(\frac{c u}{2}\right), & b=a\neq 0,
\end{cases}\label{integral}
\end{equation}
where the value of $u_0$ is taken to be $\pi/c$ for the second case and to be zero for the remaining three cases,
the solution of \eqref{theta-recast1} can be obtained as follows:
\begin{itemize}
\item When $m > \omega \geq 0$, the solution of \eqref{theta-recast1} with $\theta(0) = 0$ is
\begin{equation} \label{theta-1}
\theta(x) = \tan^{-1} ( \alpha \tanh(k\beta x)) \in \left(-\tan^{-1}(\alpha), \tan^{-1}(\alpha)\right)
\subseteq \left(-\frac{\pi}{4},\frac{\pi}{4}\right),
\end{equation}
where
\begin{equation}\label{theta-1b}
\alpha = \sqrt{\frac{m-\omega}{m+\omega}},\quad \beta = \sqrt{m^2-\omega^2}.
\end{equation}

\item When $\omega = m > 0$, the solution of \eqref{theta-recast1} with $\theta(0) = \frac{\pi}2$ is
\begin{equation}   \label{eq:w=m}
  \theta(x) = \cot^{-1}(2mkx) \in \left(0, \pi\right).
\end{equation}

\item When $\omega > m \geq 0$, the solution of \eqref{theta-recast1} with $\theta(0) = 0$ is
\begin{equation}
\label{theta-2}
\theta(x) = \tan^{-1} \left( \sqrt{\frac{\omega-m}{\omega+m}} \tan\left(-k\sqrt{\omega^2-m^2} x\right)\right)\in \left(-\frac{\pi}{2},\frac{\pi}{2}\right).
\end{equation}
\end{itemize}
Note in passing that the solution \eqref{theta-2}
can also be reformulated into the solution \eqref{theta-1} using the properties:
$\tanh(\mi x)=\mi \tan(x)$ and $\tan(-x)=-\tan(x)$,
and the last case of the integral formula \eqref{integral} can not happen
in Eq.~\eqref{theta-recast1} for both $m$ and $\omega$ are nonnegative.
The remaining task is to solve $R(x)$.

Further combining Eq.~\eqref{key-Li}, \eqref{j0R2}, \eqref{S_theta} and \eqref{generalLI-3}
yields
\begin{equation}\label{key-Li-theta}
(R(x))^{2k}G(x) = m\cos(2\theta(x))-\omega,
\end{equation}
from which we can conclude that
either $G(x)\equiv 0$ in $\mathbb{R}$ (\ie $L_I\equiv 0$ for Eq.~\eqref{generalLI-3} and will not be considered)
or $G(x)\neq 0$ for all $x\in\mathbb{R}$  holds,
namely, either $\Omega_0$ or $\Omega_1$ equals to $\mathbb{R}$ after denoting $\Omega_0:=\{x|G(x)=0\}$ and $\Omega_1:=\{x|G(x)\neq 0\}$,
and the demonstration is as follows.
For $m\geq 0$ and $\omega\geq 0$,
there are only four cases to investigate:
$m=\omega=0$, $\omega>m\geq0$, $m>\omega\geq 0$ and $m=\omega> 0$.
For all $x\in \mathbb{R}$,
we have $R(x)\neq 0$,
and $m\cos(2\theta(x))-\omega=0$ for $m=\omega=0$ or $m\cos(2\theta(x))-\omega<0$ for $\omega>m\geq 0$,
thus $G(x) = 0$ holds in the case of $m=\omega=0$,
while $G(x)\neq 0$ is true for the case of $\omega>m\geq 0$.
When $m>\omega\geq 0$,
if there exist $x_0 < x_1\in \mathbb{R}$ such that $x_0\in\Omega_0$ and $x_1\in\Omega_1$,
then we have:
on the one hand, from Eq.~\eqref{key-Li-theta}, $\cos(2\theta(x))=\frac{\omega}{m}$ holds for all $x\in[x_0,x_1]\cap \Omega_0$;
on the other hand, from Eq.~\eqref{theta-1},
there exits $M>0$ and $\delta = \frac{1-(\alpha \tanh(k\beta M))^2}{1+(\alpha \tanh(k\beta M))^2} >
\frac{1-\alpha^2}{1+\alpha^2}=\frac{\omega}{m}$ such that
$\cos(2\theta(x))\geq \delta$ holds for all $x\in[x_0,x_1]\cap \Omega_1$.
This contradicts the assumption that $\theta(x)$ as well as $\cos(2\theta(x))$ are continuous in $[x_0, x_1]$.
The discussion on the remaining case of $m=\omega>0$ is similar to that on the case of $m>\omega\geq 0$.
Below we will concentrate on the situation of $G(x) \neq 0$ as well as $m\cos(2\theta(x))-\omega\neq 0$
for all $x\in \mathbb{R}$.
From Eq.~\eqref{key-Li-theta}, $R(x)$ is solved in $\mathbb{R}$  as follows
\begin{equation}
R(x) =  \pm \left(\frac{m\cos\bmb{2\theta(x)} - \omega}{G(x)} \right)^{\frac{1}{2k}}, \label{R(x)}
\end{equation}
which expresses $R(x)$ in terms of $\theta(x)$ for Eq.~\eqref{G(theta)},
while $\theta(x)$ has been solved in Eqs.~\eqref{theta-1}, \eqref{eq:w=m} and \eqref{theta-2}.
Consequently, according to Eq.~\eqref{j0R2}, the charge density becomes
\begin{equation}
\rho_Q(x) =  \left(\frac{m\cos\bmb{2\theta(x)} - \omega}{G(x)} \right)^{\frac{1}{k}}. \label{R2}
\end{equation}
It is worth noting that
 the derivation of the above solution $\vPsi(x, t)$
 given in Eq.~\eqref{localsol1} with Eqs.~\eqref{phaseexp}, \eqref{theta-1}, \eqref{eq:w=m}, \eqref{theta-2} and \eqref{R(x)} is referred to as sufficient and logically complete,
 that is to say,
the above function $\vPsi(x, t)$
satisfies the (1+1)-dimensional NLD equation \eqref{generalnld}
with the self-interaction \eqref{generalLI}.
Its  demonstration is easy through directly substituting  $\vPsi(x, t)$ into
 \eqref{generalnld} with \eqref{generalLI} and some algebraic manipulations.

The physical solutions with which the charge $Q$ is finite (\ie $Q<+\infty$
implying ${\displaystyle \lim_{x\rightarrow\infty}}\rho_Q(x)=0$ must be true)
will be selected from Eqs.~\eqref{theta-1}, \eqref{eq:w=m}, \eqref{theta-2} and \eqref{R(x)}.
On the one hand, we discard the case of $k\in\{-2,-3,-4,\cdots\}$
in which the limit of $\rho_Q(x)$ can not be zero as $x\rightarrow\infty$ and the reason is as follows.
For example,
when $m>\omega\geq0$ and $k<-1$,
the parameter ${p}$ must be zero
(otherwise $G(0)$ will be infinity and then $\rho_Q(x)$ will be discontinuous at $x=0$
according to Eq.~\eqref{R2}), and from Eq.~\eqref{G(theta)}
we have $G(x)=s(\cos(2\theta(x)))^{k+1}+{v}$,
thus ${\displaystyle \lim_{x\rightarrow+\infty}}(m\cos(2\theta(x))-\omega)=0$ for Eq.~\eqref{theta-1}
and
${\displaystyle \lim_{x\rightarrow+\infty}}G(x)=s(\frac{\omega}{m})^{k+1}+{v}$.
In consequence,
from Eq.~\eqref{R2},
it is evident that
$\rho_Q(x)$ will diverge as $x\rightarrow\infty$ if $s(\frac{\omega}{m})^{k+1}+{v} \neq 0$.
In the case of $s(\frac{\omega}{m})^{k+1}+{v} = 0$
which implies that $s$ can not be zero, directly using L'Hospital's rule gives
${\displaystyle \lim_{x\rightarrow+\infty}}\rho_Q(x)=\big(\frac{m}{s(k+1)}\big)^{\frac{1}{k}}\frac{m}{\omega}\neq 0$.
On the other hand, the case of $k\in\mathbb{Z}^+$ and $\omega > m \geq 0$ is also discarded
and the reason is,  in such case,
we have both $m\cos\bmb{2\theta(x)}-\omega \leq m-\omega<0$ and
$|G(x)|\leq |s|+|{p}|+|{v}|$
hold for all $x\in\mathbb{R}$,
thus there exists $\delta = \bmb{\frac{\omega-m}{|s|+|{p}|+|{v}|}}^{\frac1k}>0$
such that $\rho_Q(x) > \delta$ holds for all $x \in \mathbb{R}$.
Therefore, the physical solutions may exist only in the situation of
$k\in\mathbb{Z}^+$ and $m\geq\omega\geq 0$ and will be searched in the following.

\subsection{$k\in\mathbb{Z}^+$ and $m>\omega\geq 0$}
\label{sec:m>w>=0}

This subsection focuses on the situation with $k\in\mathbb{Z}^+$ and $m>\omega\geq 0$,
in which $\theta(x)$ is given in Eq.~\eqref{theta-1} and
monotonously increases from
$-\tan^{-1}(\alpha)$ to $\tan^{-1}(\alpha)$
as $x$ goes from $-\infty$ to $+\infty$.
Then we have
\begin{equation} \label{eq:mcos2-w-m>w}
m\cos 2\theta-\omega =\frac{\alpha\beta \sech^2(k\beta x)}{1+(\alpha \tanh (k\beta x))^2}
 \in (0,m-\omega],
\end{equation}
and thus $1\geq\cos2\theta > \frac{\omega}{m}\geq 0$.
In order to facilitate the subsequent discussion,
we introduce an intermediate function
$y(x)=\tanh(k \beta x)$ which increases monotonously
from $-1$ to $1$ when $x$ goes from $-\infty$ to $+\infty$
and thus $\displaystyle \lim_{x\rightarrow\pm\infty} y(x) = \pm 1$.
The dependence of $y(x)$ and $\theta(x)$ on $x$ is implicitly
implied hereafter.
From Eq.~\eqref{theta-1}, we have the relation
\begin{equation}\label{eq:relation-z-y}
y = \frac{1}{\alpha}\tan(\theta), \quad
\cos\bmb{2\theta} = \frac{1-\alpha ^2y^2}{1+\alpha ^2y^2}, \quad
\sin\bmb{2\theta}=\frac{2\alpha y}{1+\alpha ^2y^2},
\end{equation}
and then rewrite $G(x)$ given in Eq.~\eqref{G(theta)} and $\rho_Q(x)$ given in Eq.~\eqref{R2}
in terms of $y$ into $\tilde{G}(y)$ and $\tilde{\rho}_Q(y)$, respectively,  as follows
\begin{align}\label{tildeG}
\tilde{G}(y) & = s \bmb{\frac{1-\alpha ^2y^2}{1+\alpha ^2y^2}}^{k+1} +
{p} \bmb{\frac{2\alpha y}{1+\alpha ^2y^2}}^{k+1} + {v},\\
\tilde{\rho}_Q(y) & = \bmb{\alpha\beta}^{\frac{1}{k}}
\bmb{\frac{1-y^2}{1+\alpha ^2y^2}}^{\frac1k}\bmb{\frac{1}{\tilde{G}(y)}}^{\frac1k}, \label{tildej0}
\end{align}
and
$\rho_Q(x)>0$ shown in Eq.~\eqref{R2} gives
\begin{equation} \label{G(y)1/k>0}
  \forall y \in(-1,1), \quad \tilde{G}(y) > 0.
\end{equation}
Then the charge $Q$ becomes
\begin{equation}
Q  = \frac{(\alpha\beta)^{\frac{1}{k}}}{k\beta} I(\alpha, k),  \label{eq:Q-sw}
\end{equation}
where
\begin{equation}\label{I-k}
  I(\alpha, k)
  := \int_{-1}^{1}
  \frac{(1-y^2)^{\frac{1}{k}-1} }{\left((1+\alpha ^2y^2)\tilde{G}(y)\right)^{\frac{1}{k}}} \dif y.
\end{equation}
Because $\alpha,\beta\in(0,+\infty)$ and $k\in\mathbb{Z}^+$,
the finite charge condition is equivalent to
\begin{equation}\label{ieq:I<infty}
I(\alpha, k) < \infty,
\end{equation}
and the necessary condition $\displaystyle \lim_{x \rightarrow \infty} \rho_Q(x)=0$ implies
\begin{equation}\label{tildej0=0}
\lim_{y\rightarrow\pm 1} \tilde{\rho}_Q(y) = 0.
\end{equation}

In short, we should seek the solution in the situation with $k\in\mathbb{Z}^+$ and $m>\omega\geq 0$
satisfying the restrictions \eqref{G(y)1/k>0} and \eqref{ieq:I<infty}.
Given $(s,{p},{v})\in\mathcal{E}_k$, the foregoing restrictions are used to determine
the feasible set for $\omega$,
and the discussion is split into two cases:
one is for $k=1$,
the other is for $k\in\{2,3,4,\cdots\}$.

\noindent $\bullet$ When $k=1$,
the inequality \eqref{G(y)1/k>0} becomes:
$\tilde{G}(y)=(s-{p})\bmb{\frac{1-\alpha ^2y^2}{1+\alpha ^2y^2}}^{2}+ {v}+{p}>0$
holds for any arbitrary $y\in(-1,1)$.
It is equivalent to both $\tilde{G}(0)>0$ and $\tilde{G}(1)\geq0$ hold
since $\tilde{G}(y)$ is even with respect to $y\in(-1,1)$
and increases monotonously when $s-{p}\leq 0$ and
decreases monotonously when $s-{p}>0$ as $y$ increases in $[0,1)$.
If $\tilde{G}(1)=0$,
then ${\displaystyle
\lim_{y \rightarrow \pm 1}}\tilde{\rho}_Q(y)
=\frac{\beta(1+\alpha ^2)}{2\alpha\bmb{(s-{p})(1-\alpha ^2)-({v}+{p}) (1+ \alpha ^2)}}\neq 0$,
which violates the necessary condition \eqref{tildej0=0},
and thus we require $\tilde{G}(1) >0$.
Therefore,
for a given $\omega \in\mathcal{F}_{1}$,
there exists $\epsilon = \min\{\tilde{G}(0), \tilde{G}(1)\} >0$
such that
$I(\alpha, 1) \leq \frac{1}{\epsilon}
  \int_{-1}^{1}\frac{1}{1+\alpha^2y^2} \dif y \leq  \frac{2}{\epsilon}<\infty$,
where the set $\mathcal{F}_1$ is define by
\begin{equation}
\mathcal{F}_{1} := \{\omega |  \omega\in[0,m),
\tilde{G}(0)>0, \tilde{G}(1)>0\}. \label{F_k1}
\end{equation}
That is, the feasible set of $\omega$ for the case of $k=1$ is $\mathcal{F}_{1}$.

\noindent $\bullet$
When $k\in\{2,3,4,\cdots\}$,
if $\tilde{G}(1) = 0$, then ${\displaystyle \lim_{y \rightarrow 1}} \tilde{\rho}_Q(y) =
\bmb{\frac{\beta(1+\alpha ^2)^k}{(k+1)(s\alpha(1-\alpha ^2)^k - p\bmb{2\alpha}^{k} -{v}\alpha(1+\alpha ^2)^k)}}^{\frac1k}
\neq 0$, and if $\tilde{G}(-1) = 0$, then
${\displaystyle \lim_{y \rightarrow -1}} \tilde{\rho}_Q(y)$$=
\bmb{\frac{\beta(1+\alpha ^2)^k }{(k+1)(s\alpha(1-\alpha ^2)^k + p\bmb{-2\alpha}^{k} -{v}\alpha(1+\alpha ^2)^k)}}^\frac1k
\neq 0$,
both of which violate the necessary condition \eqref{tildej0=0}.
 Thus, the feasible set for $\omega$ in the case of $k\in\{2,3,4,\cdots\}$
 should be a subset of the following set
 \begin{align*}
   \mathcal{F}_{k} :=& \{\omega |  \omega\in[0,m),
 \forall y \in [-1,1], \tilde{G}(y) > 0 \}, \;\; \text{for} \;\; k\in\{2,3,4,\cdots\}.
 \end{align*}
Since $\tilde{G}(y)$ has at most three extreme points at
$0, \pm \frac{1}{\alpha}\tan\bmb{\frac12 \tan^{-1}\big|\frac{s}{{p}}\big|^{\frac{1}{k-1}}}$ for
$y \in (-1,1)$,
the minimum of $\tilde{G}(y)$ must locate among these three extreme points and the two endpoints.
In consequence, we have equivalently
\begin{equation}\label{F_k2}
  \mathcal{F}_{k}
= \{ \omega |  \omega\in[0,m),
\forall y \in \mathcal{P}, \tilde{G}(y) > 0\}, \;\;\text{for}\;\; k\in\{2,3,4,\cdots\},
\end{equation}
where
$\mathcal{P}
:=\Big\{0, \pm 1, \pm \frac{1}{\alpha}\tan\bmb{\frac12 \tan^{-1}\big|\frac{s}{{p}}\big|^{\frac{1}{k-1}}} \Big\}\cap [-1,1]$ is a finite set of no more than five elements.
It can be readily verified that, for a given $\omega \in\mathcal{F}_{k}$ with $k\in\{2,3,4,\cdots\}$,
there exists $\epsilon = \min_{y \in [-1,1]}\{\tilde{G}(y)\}>0$
such that
$I(\alpha, k)
  \leq \frac{1}{\epsilon^{\frac{1}{k}}}\int_{-1}^{1}
  (1-y^2)^{\frac{1}{k}-1}   \dif y
=  \frac{\sqrt{\pi }\Gamma\bmb{\frac{1}{k}}}{{\epsilon}^{\frac{1}{k}} \Gamma\bmb{\frac{1}{2}+\frac{1}{k}}}
<\infty$,
where $\Gamma(x)$ is the gamma function.
That is, the feasible set of $\omega$ for the case of $k\in\{2,3,4,\cdots\}$
is indeed $\mathcal{F}_{k}$
given in Eq.~\eqref{F_k2}.

\begin{Remark}\rm
Generally,
$\mathcal{F}_k \varsubsetneqq [0, m)$ holds
for most cases of $(s,{p},{v})\in\mathcal{E}_k$ with $k\in\mathbb{Z}^+$.
For the NLD solitary waves with the scalar and vector self-interaction
and $s>0, -s<v\leq0, p=0$,
the feasible set becomes $\mathcal{F}_k = \bmb{\bmb{\frac{-v}{s}}^{\frac1{k+1}}m, m}$ for any $k\in\mathbb{Z}^+$,
while for those with only the pseudoscalar self-interaction (\ie $s=0, p\neq 0,v=0$),
the feasible set becomes $\mathcal{F}_k = \varnothing$ holds for all $k\in\mathbb{Z}^+$.
\end{Remark}

\subsection{$k\in\mathbb{Z}^+$ and $\omega=m>0$}
\label{sec:w=m>0}

This subsection concerns the situation with $k\in\mathbb{Z}^+$ and $\omega=m> 0$
in which $\theta(x)$ is given in Eq.~\eqref{eq:w=m}.
Consequently, we have
\begin{align}
\cos 2\theta &= \frac{(2kmx)^2-1}{(2kmx)^2+1}\in[-1,1), \label{eq:cos-m=w} \\
\sin 2\theta &= \frac{4kmx}{(2kmx)^2+1}\in[-1,1], \nonumber \\
G(x) &= s \bmb{\frac{(2kmx)^2-1}{(2kmx)^2+1}}^{k+1}
   + {p} \bmb{\frac{4kmx }{(2kmx)^2+1}}^{k+1}+ {v}, \nonumber \\
m \cos 2\theta -\omega &= -\frac{2m}{(2kmx)^2+1} \in [-2m, 0), \label{eq:mcos2-w-m=w}
\end{align}
and the charge density becomes
\begin{equation}\label{eq:R:w=m}
j_0(x) = \left(-\frac{2m}{(1+(2kmx)^2)G(x)} \right)^{\frac{1}{k}},
\end{equation}
with $|G(x)| < |s|+|{p}|+|{v}|$.
That ${j}_0(x)>0$ shown in Eq.~\eqref{j0R2} gives
\begin{equation} \label{eq:j0>0-m=w}
G(x) < 0, \quad \forall x \in (-\infty,+\infty).
\end{equation}
Since $j_0(x) \propto \frac{1}{x^{2/k}} (x \rightarrow \infty)$,
the finite charge condition
requires $0<k <2$.
It is worth nothing that
$j_0(x)$ decays polynomially to zero as $x\rightarrow \infty$,
which is totally different from the exponential decay
happens in Section \ref{sec:m>w>=0}.
Therefore, we only need to consider the case of $k=1$
in which the restriction \eqref{eq:j0>0-m=w} becomes:
$\forall x\in(-\infty,\infty), G(x) = (s-{p}) \left(\frac{(2mx)^2-1}{(2mx)^2+1}\right)^{2} + {v}+{p} <0$,
that is equivalent to requiring
$G(0)=s + {v}< 0$ as well as  $G\big(\frac1{2m}\big) = {v} + {p} <0$
for $G(x)$ is even and has
three local extreme points $x=0,\pm\frac1{2m}$,
and ${\displaystyle \lim_{x\rightarrow\infty}}G(x) = s + {v} = G(0)$.
Accordingly,
we have
\begin{equation}
Q = \frac{\pi}{\sqrt{(s+{v})({v}+ {p})}},
\end{equation}
\ie the charge is finite.
Hence for $\omega=m>0$ and $k=1$,
we have the NLD solitary waves displayed in Eqs.~\eqref{localsol1} and \eqref{phaseexp}
satisfies the finite charge condition
if the linear combination coefficients $(s,p,v)$ belong to
\begin{equation*}
\mathcal{E}^-_1 :=
\{(s,p,v)| s + v < 0, v + p < 0\}\subset \mathcal{E}_1,
\end{equation*}
otherwise, the charge corresponding to the NLD spinor given in Eqs.~\eqref{localsol1} and \eqref{phaseexp}
can not be finite or Eq.~\eqref{j0R2} can not hold for all $x\in\mathbb{R}$.

\begin{Remark}\rm
It was pointed out that the profile of the charge density
$\rho_Q(x)$ given in Eq.~\eqref{R2} is either one-hump or two-hump
under only the quadric scalar self-interaction (\ie $k=1$, $v=p=0$)
\cite{ShaoTang2005}, which is still true for the scalar
self-interaction with more general integer exponent power (\ie
$k>1$, $v=p=0$) \cite{CooperKhareMihailaSaxena2010}. What role does
such multi-hump structure play in the interaction dynamics for the
NLD solitary waves attracts a lot of attention. Numerical results
have shown that the two-hump NLD solitary waves may collapse ({\ie
they after collision stop being solitary waves}) during the
scattering, whereas the collapse phenomena cannot be generally
observed in collisions of the one-hump NLD solitary waves
\cite{ShaoTang2005,ShaoTang2008}. {Since the collision can be
regarded as a solution of the time-containing equation with the
initial condition formed by two or more solitary waves separated
from each other by large distances, so as to be independent, we
guess, the ``instability'' is related to such collapse.
%which is the situation where solitary waves after collision stop being
%solitary waves.
%what kind of ``stability'' is related to such collapse is not clear to us at present,
%and
More efforts are still needed in exploring the physical mechanism of the collapse,}
such as when and why the NLD solitary waves may collapse during
their interaction dynamics.
An more direct question is naturally raised:
Is there a connection between the instability (\ie collapse) and the multi-hump structure?
Very recently,
we have further shown that collapse happens after binary collision of one-hump NLD solitary waves
under the cubic self-interaction in contrast to no collapse scattering
for corresponding quadric case \cite{XuShaoTang2013}.
In summary, both the multi-hump (two-hump) structure and
high order nonlinearity could undermine
the stability during the scattering of the NLD solitary waves.
In the next section,
we will show that
the multi-hump structure depends on
the linear combination coefficients $(s,p,v)\in\mathcal{E}_k$ and
the integer power exponent $k+1$.
\end{Remark}

\section{Multi-hump structure}
\label{sec:multi-hump}

This section will focus on investigating
the multi-hump structure of
the NLD solitary wave \eqref{localsol1} with
the linear combined self-interaction \eqref{generalLI}
for $(s,p,v)\in\mathcal{E}_k$ and $\omega\in\mathcal{F}_k$ when $k\in\mathbb{Z}^+$ and $m>\omega\geq 0$
as well as
for $(s,p,v)\in\mathcal{E}_1^-$ when $k=1$ and $m=\omega > 0$.
More specifically,
we will answer:
Can the charge density $\rho_Q(x)$ have more humps than two under the linear combined self-interaction \eqref{generalLI}?
At most how many humps can the charge density $\rho_Q(x)$ afford?
Can we have similar results for the energy density or the momentum density?

\subsection{$k\in\mathbb{Z}^+$ and $m>\omega\geq 0$}
\label{sec:hump:m>w>=0}

The number of humps in the charge density $\rho_Q(x)$ is equal to
the number of its local maximum and can be determined by the number of zeros of
$\dd{\rho_Q(x)}{x}$,
and the zeros of $\dd{\rho_Q(x)}{x}$ is the same as those of $\dd{\rho_Q^k(x)}{x}$
for $\rho_Q(x)> 0$ and $k\in\mathbb{Z}^+$.
When $m>\omega\geq0$,
for convenience,
we introduce an intermediate variable $\xi=2\theta$ and
rewrite $\rho_Q(x)$ in terms of $\xi$ into
\begin{equation}\label{j-xi}
\hat{\rho}_Q(\xi) = m ^\frac1k \bmb{\frac{\cos\xi - a}{\hat{G}(\xi)}}^{\frac1k}>0,
\quad \forall\xi \in I,
\end{equation}
where
\begin{align*}
a:&=\frac{\omega}{m}\in[0,1), \\
\hat{G}(\xi) :&= s \cos^{k+1}\xi + p \sin^{k+1}\xi + v, \\
I:&=(-\cos^{-1}(a), \cos^{-1}(a))\subset
\bmb{-\frac{\pi}2,\frac{\pi}2}.
\end{align*}
Combining Eqs.~\eqref{theta-recast1}, \eqref{eq:mcos2-w-m>w} and \eqref{eq:relation-z-y}
gives that $\dd{\xi}{x} >0$ hold for all $x\in\mathbb{R}$,
and then the chain rule
\begin{equation}\label{eq:j0-k}
\dd{\rho_Q^k(x)}{x} = \dd{\hat{\rho}_Q^k(\xi)}{\xi} \dd{\xi}{x}=-\hat{\rho}_Q^{2k}(\xi)\dd{\hat{\rho}_Q^{-k}(\xi)}{\xi} \dd{\xi}{x}
\end{equation}
further implies that
$\dd{\rho_Q^k(x)}{x}$ has the same zeros as $\dd{\hat{\rho}_Q^{-k}(\xi)}{\xi}$.
That is, the remaining task is to determine or estimate the number of zeros of
$\dd{\hat{j}_0^{-k}(\xi)}{\xi}$. To this end,
technically, we need the following two lemmas
in which $\numzero_\Omega[f]$ (resp. $\numextr_\Omega[f]$) represents
the number of zeros (resp. extreme points at which the derivatives of $f(\xi)$ are zero) of function $f(\xi)\in C^1(\Omega)$
in an open interval $\Omega$. %It is trivial to see that the first lemma is true.

\begin{Lemma}\rm\label{lemma1}
Given $f(\xi)\in C^1(\Omega)$, we have

(i) $\numzero_\Omega[f]\leq \numextr_\Omega[f] + 1$;

(ii) $\numzero_\Omega[\alpha f] = \numzero_\Omega[f]$ and $\numextr_\Omega[\alpha f]=\numextr_\Omega[f]$
hold for any $\alpha\neq 0$.
\end{Lemma}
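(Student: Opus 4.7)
My plan is to reduce part (i) to a direct application of Rolle's theorem, and part (ii) to the elementary fact that scalar multiplication by a nonzero constant preserves both the vanishing locus of a function and the vanishing locus of its derivative.

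For part (i), I would first dispose of the trivial cases $\numzero_\Omega[f] \in \{0, 1\}$, where the inequality holds vacuously. Assuming $\numzero_\Omega[f] \geq 2$, I enumerate the zeros of $f$ in $\Omega$ as $\xi_1 < \xi_2 < \cdots$ (the sequence may terminate or be infinite). Since $f \in C^1(\Omega)$ and $f(\xi_i) = f(\xi_{i+1}) = 0$, Rolle's theorem applied on each closed subinterval $[\xi_i, \xi_{i+1}] \subset \Omega$ produces an $\eta_i \in (\xi_i, \xi_{i+1})$ with $f'(\eta_i) = 0$, so $\eta_i$ is an extreme point of $f$ in $\Omega$. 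The points $\eta_i$ are pairwise distinct because they lie in disjoint open intervals, hence $\numextr_\Omega[f] \geq \numzero_\Omega[f] - 1$, which is equivalent to the claimed inequality. The case $\numzero_\Omega[f] = \infty$ is handled by the same construction, which then yields infinitely many $\eta_i$ and forces $\numextr_\Omega[f] = \infty$ as well.

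For part (ii), since $\alpha \neq 0$, the equivalence $(\alpha f)(\xi) = 0 \iff f(\xi) = 0$ shows that $\alpha f$ and $f$ have identical zero sets in $\Omega$, so $\numzero_\Omega[\alpha f] = \numzero_\Omega[f]$. Likewise, $(\alpha f)'(\xi) = \alpha f'(\xi)$ vanishes precisely where $f'(\xi)$ does, so the extreme points of $\alpha f$ and $f$ coincide, giving $\numextr_\Omega[\alpha f] = \numextr_\Omega[f]$.

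There is no real obstacle here: the lemma is a standard bookkeeping consequence of Rolle's theorem and the linearity of differentiation. The only point requiring modest care is the treatment of a possibly infinite number of zeros, and this is resolved by the same Rolle-theorem construction applied to consecutive pairs. The role of the lemma in what follows is purely combinatorial, enabling the hump count to be tracked through the substitutions and rescalings made on $\hat{\rho}_Q$ in Eq.~\eqref{j-xi}.
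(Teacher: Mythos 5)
Your proof is correct. Note that the paper states Lemma~\ref{lemma1} with no proof at all, treating it as elementary bookkeeping, so there is no authorial argument to compare against; your Rolle's-theorem derivation of (i) and the observation for (ii) that multiplication by $\alpha\neq 0$ preserves the zero sets of both $f$ and $f'$ are exactly the standard justifications the authors implicitly rely on. Two minor points worth tightening: since the paper's $\numextr_\Omega[f]$ counts \emph{extreme} points with vanishing derivative rather than mere critical points, you should observe that between consecutive zeros $f$ is nonvanishing and hence attains a nonzero interior maximum or minimum, which is a genuine local extremum with $f'=0$; and when the zero set is infinite it need not be enumerable as an increasing sequence (it may accumulate), but applying the same Rolle construction to arbitrarily large finite subsets of zeros still forces $\numextr_\Omega[f]=\infty$, so the inequality survives.
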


\begin{Lemma}\label{le:fOverg}\rm
Suppose $f(\xi),g(\xi),\frac{g^\prime(\xi)}{f^\prime(\xi)}\in C^1(\Omega)$,
and $f(\xi)\neq 0$ holds for all $\xi \in \Omega$.
Then $$\numextr_\Omega\bbmb{\frac{g}{f}} \leq \numextr_\Omega\bbmb{\frac{g'}{f'}} + \numzero_\Omega[f']+1.$$
\end{Lemma}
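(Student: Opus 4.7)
\textbf{Proof proposal for Lemma \ref{le:fOverg}.}

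My plan is to convert the counting problem for extrema of $g/f$ into a Rolle-type count by introducing a single auxiliary function whose derivative vanishes exactly at the extrema of $g'/f'$. Since $f(\xi)\neq 0$ on $\Omega$, one has $(g/f)'=(g'f-gf')/f^2$, so the extrema of $g/f$ coincide with the zeros of $\phi:=g'f-gf'$, and it suffices to bound $\numzero_\Omega[\phi]$.

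The key algebraic step will be to set $v:=g'/f'$, which lies in $C^1(\Omega)$ by hypothesis, and to use the identity $g'=vf'$ to factor
\begin{equation*}
\phi = g'f-gf' = vf'\cdot f - gf' = f'\bmb{vf-g}.
\end{equation*}
Defining $\psi:=vf-g\in C^1(\Omega)$ and differentiating gives the decisive cancellation
\begin{equation*}
\psi' = v'f + vf' - g' = v'f,
\end{equation*}
where I use $g'=vf'$ on all of $\Omega$ (at points where $f'=0$, one must also have $g'=0$, since $v$ is finite there by assumption). Because $f$ is nowhere zero on $\Omega$, the zeros of $\psi'$ are precisely the zeros of $v'$, so $\numzero_\Omega[\psi'] = \numextr_\Omega[g'/f']$.

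Lemma \ref{lemma1}(i) applied to $\psi$ then gives
\begin{equation*}
\numzero_\Omega[\psi]\leq \numextr_\Omega[\psi]+1 = \numzero_\Omega[\psi']+1 = \numextr_\Omega[g'/f']+1,
\end{equation*}
while the factorisation $\phi=f'\psi$ forces every zero of $\phi$ to be a zero of $f'$ or of $\psi$, so $\numzero_\Omega[\phi]\leq \numzero_\Omega[f']+\numzero_\Omega[\psi]$. Stringing these inequalities together produces the asserted bound.

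The step I expect to demand the most care is spotting the factorisation $\phi=f'(vf-g)$ and then verifying the identity $\psi'=v'f$: the whole argument hinges on the cancellation that eliminates every occurrence of $f'$ from $\psi'$, thereby replacing a messy quotient-derivative count with a clean Rolle count. Once this identity is in hand, the remainder is bookkeeping via Lemma \ref{lemma1} and a union bound on the zeros of a product. The only technical checkpoint is justifying $g'=vf'$ at points where $f'$ vanishes, which follows from the assumed $C^1$-regularity of $g'/f'$.
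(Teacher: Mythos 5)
Your proposal is correct and follows essentially the same route as the paper: the paper also factors $(g/f)'=\frac{f'}{f^2}\bigl(\frac{g'}{f'}f-g\bigr)$, uses the cancellation $\bigl(\frac{g'}{f'}f-g\bigr)'=\bigl(\frac{g'}{f'}\bigr)'f$ to identify $\numextr_\Omega\bigl[\frac{g'}{f'}f-g\bigr]=\numextr_\Omega\bigl[\frac{g'}{f'}\bigr]$, and then combines the union bound on the product with Lemma \ref{lemma1}(i). Your explicit remark about the points where $f'=0$ (forcing $g'=0$ there by the $C^1$ hypothesis on $g'/f'$) is a small extra justification the paper leaves implicit, but the argument is the same.
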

\begin{proof}
Since $\bmb{\frac{g'}{f'} f - g}' = \bmb{\frac{g'}{f'}}'f$,
then
\begin{equation}\label{lemma2-eq1}
\numextr_\Omega\bbmb{\frac{g'}{f'} f - g} = \numextr_\Omega\bbmb{\frac{g'}{f'}},
\end{equation}
for $f(\xi)\neq 0$ holds for all $\xi \in \Omega$.
Similarly, since
$\bmb{\frac{g}{f}}' = \frac{f'}{f^2}\bmb{\frac{g'}{f'} f - g}$,
then
\begin{align*}
\numextr_\Omega \bbmb{\frac{g}{f}}
&\leq \numzero_\Omega\bbmb{\frac{g'}{f'} f - g} + \numzero_\Omega[f'] \\
&\leq \numextr_\Omega\bbmb{\frac{g'}{f'} f - g} + 1 + \numzero_\Omega[f']\\
&\leq \numextr_\Omega\bbmb{\frac{g'}{f'}} + \numzero_\Omega[f']+ 1,
\end{align*}
where we have used Lemma \ref{lemma1}(i) in the second line
and Eq.~\eqref{lemma2-eq1} in the last line.
\end{proof}

We are now in position to determine
$\numextr_I\bbmb{\hat{\rho}_Q^{-k}}$, \ie
the number of extreme points
of $\hat{\rho}_Q^{-k}(\xi)$ in
the interval $I$.
Because
\begin{equation}\label{j0-kxi}
   \hat{\rho}_Q^{-k}(\xi) = \frac{1}{m} \frac{\hat{G}(\xi)}{\cos\xi - a} =: \frac{1}{m} \frac{g_1(\xi)}{f_1(\xi)},
\end{equation}
for Eq.~\eqref{j-xi}, using Lemma \ref{lemma1}(ii) and Lemma \ref{le:fOverg} directly gives
\begin{equation}\label{Eg1/f1}
\numextr_I\bbmb{\hat{\rho}_Q^{-k}} = \numextr_I\bbmb{\frac{1}{m}\frac{g_1}{f_1}} = \numextr_I\bbmb{\frac{g_1}{f_1}}
\leq \numextr_I\bbmb{\frac{g_1^\prime}{f_1^\prime}}+\numzero_I[f_1^\prime]+1.
\end{equation}
Direct calculation shows
\begin{align}
g_1'(\xi) &= (k+1)(-s \cos^k\xi \sin \xi + p \sin^k\xi\cos\xi), \label{g1p}\\
f_1'(\xi) &= -\sin \xi, \nonumber
\end{align}
thus
\begin{equation}\label{Zf1}
\numzero_I[f_1^\prime] = 1,
\end{equation}
and
\begin{equation}\label{g1p/f1p}
\frac{g_1'(\xi)}{f_1'(\xi)}
= \frac{(k+1)(s - p \tan^{k-1}\xi)}{\cos^{-k}\xi} = :
    (k+1)\frac{g_2(\xi)}{f_2(\xi)}
\end{equation}
which implies by Lemma \ref{lemma1}(ii) and Lemma \ref{le:fOverg} that
\begin{equation}\label{Eg1p/f1p}
\numextr_I\bbmb{\frac{g_1^\prime}{f_1^\prime}}=\numextr_I\bbmb{(k+1)\frac{g_2}{f_2}}
=\numextr_I\bbmb{\frac{g_2}{f_2}}\leq \numextr_I\bbmb{\frac{g_2^\prime}{f_2^\prime}}
+\numzero_I[f_2^\prime]+1,
\end{equation}
for $k\in\mathbb{Z}^+$.
It can easily be shown that
\begin{align}
    g_2'(\xi) &= - p(k-1) \tan^{k-2}\xi \cos^{-2}\xi, \label{g2p}  \\
    f_2'(\xi) &= k \cos^{-k-1}\xi \sin\xi, \nonumber
\end{align}
and then
\begin{align}
\numzero_I[f_2^\prime] &= 1, \label{Zf2}\\
\frac{g_2'(\xi)}{f_2'(\xi)}  &=-\tfrac{p(k-1)}{k}\sin^{k-3}\xi \cos\xi, \label{g2p/f2p}\\
\bmb{\frac{g_2'(\xi)}{f_2'(\xi)}}^\prime  &
= - \tfrac{p(k-1)}{k} \sin^{k-4}\xi\bmb{(k-3)\cos^2\xi -\sin^2\xi}. \label{Eg2p/f2p}
\end{align}

Based on the foregoing derivation,
we can rigorously determine the number of humps of the charge density $\rho_Q(x)$
and the results are summarized as follows.

\noindent $\bullet$ \textbf{Case Q1}  When $k\in\mathbb{Z}^+$ and $s=p=0$, we have $v\neq 0$ for $(s,p,v)\in\mathcal{E}_k$
and Eq.~\eqref{j0-kxi} becomes
\begin{equation*}
\hat{\rho}_Q^{-k}(\xi) = \frac{v}{m} \frac{1}{\cos\xi - a},
\end{equation*}
from which it can readily be seen that the charge density has only one hump.

\noindent $\bullet$ \textbf{Case Q2}  When $k\in\mathbb{Z}^+$, $p=0$ and $s\neq 0$,
Eq.~\eqref{g1p/f1p} becomes
\begin{equation*}
\frac{g_1'(\xi)}{f_1'(\xi)}
= (k+1)s \cos^k\xi,
\end{equation*}
then we have $\numextr_I\bbmb{\frac{g_1'}{f_1'}}=1$,
thus $\numextr_I\bbmb{\hat{\rho}_Q^{-k}} \leq 3$ for Eq.~\eqref{Eg1/f1},
\ie the charge density  has at most two humps.

\noindent $\bullet$ \textbf{Case Q3}  When $k=1$ and $s = p\neq 0$,
Eq.~\eqref{j0-kxi} becomes
\begin{equation*}
\hat{\rho}_Q^{-1}(\xi) = \frac{s+v}{m} \frac{1}{\cos\xi - a},
\end{equation*}
from which it can readily be seen that the charge density has only one hump.

\noindent $\bullet$ \textbf{Case Q4}  When $k=1$, $p\neq 0$ and $s \neq p$,
Eq.~\eqref{g1p/f1p} becomes
\begin{equation*}
\frac{g_1'(\xi)}{f_1'(\xi)}
= 2(s-p) \cos\xi,
\end{equation*}
and we have $\numextr_I\bbmb{\frac{g_1'}{f_1'}}=1$,
thus $\numextr_I\bbmb{\hat{\rho}_Q^{-1}} \leq 3$ for Eq.~\eqref{Eg1/f1},
\ie the charge density $\rho_Q(x)$ has at most two humps.

\noindent $\bullet$ \textbf{Case Q5}  When $k=2$ and $p\neq 0$, Eq.~\eqref{g1p/f1p} becomes
\begin{equation*}
  \frac{g_1'(\xi)}{f_1'(\xi)} = 3(s \cos^2\xi  - p \sin\xi\cos\xi),
\end{equation*}
then we have
 \begin{equation*}
   \bmb{\frac{g_1'(\xi)}{f_1'(\xi)}}^\prime = -3(s\sin(2\xi) + p\cos(2\xi)),
 \end{equation*}
 and $\numextr_I\bbmb{\frac{g_1'}{f_1'}}\leq 2$,
 thus $\numextr_I\bbmb{\hat{\rho}_Q^{-2}} \leq 4$ for Eq.~\eqref{Eg1/f1},
\ie the charge density $\rho_Q(x)$ has at most two humps.

\noindent $\bullet$ \textbf{Case Q6}  When $k=3$ and $p\neq 0$, Eq.~\eqref{g2p/f2p} becomes
\begin{equation*}
\frac{g_2'(\xi)}{f_2'(\xi)}  =-\frac{2p}{3} \cos\xi,
\end{equation*}
then we have $\numextr_I\bbmb{\frac{g_2'}{f_2'}}=1$,
thus $\numextr_I\bbmb{\hat{\rho}_Q^{-3}} \leq 5$ for Eqs.~\eqref{Eg1/f1}, \eqref{Zf1}, \eqref{Eg1p/f1p}
and \eqref{Zf2},
\ie the charge density  has at most three humps.

\noindent $\bullet$ \textbf{Case Q7}  When $k=4$ and $p\neq 0$, Eq.~\eqref{Eg2p/f2p} becomes
\begin{equation*}
  \bmb{\frac{g_2'(\xi)}{f_2'(\xi)}}^\prime
= - \frac{3p}{4} \cos (2\xi),
\end{equation*}
and implies that
$\frac{g_2'(\xi)}{f_2'(\xi)}$ has at most two extreme points,
then we have
$
\numextr_I\bbmb{\frac{g_2^\prime}{f_2^\prime}}\leq 2,
$
thus
$
\numextr_I\bbmb{\hat{\rho}_Q^{-4}} \leq 6,
$
for Eqs.~\eqref{Eg1/f1}, \eqref{Zf1}, \eqref{Eg1p/f1p}
and \eqref{Zf2},
which means that the charge density has at most three humps.

\noindent $\bullet$ \textbf{Case Q8}  When $k\in\{5,6,7,\cdots\}$ and $p\neq 0$,
Eq.~\eqref{Eg2p/f2p} implies that
the extreme points of $\frac{g_2'(\xi)}{f_2'(\xi)}$ possibly locate at $\xi=0$ and $\xi=\pm\tan^{-1}(\sqrt{k-3})$
and thus
\begin{equation}\label{Eg2p/f2p-1}
\numextr_I\bbmb{\frac{g_2^\prime}{f_2^\prime}}\leq 3.
\end{equation}
Combining Eqs.~\eqref{Eg1/f1}, \eqref{Zf1}, \eqref{Eg1p/f1p},
\eqref{Zf2} and \eqref{Eg2p/f2p-1} leads to
\begin{equation}
\numextr_I\bbmb{\hat{\rho}_Q^{-k}} \leq 7,
\end{equation}
which means that the charge density has at most four humps.

\begin{Remark}\rm
Our analysis has shown that:
(i) the charge density has only one hump under
the pure vector self-interaction \cite{CooperKhareMihailaSaxena2010}
and has either one hump or two humps under
the linear combination of the scalar and vector self-interactions \cite{XuShaoTang2013};
(ii)
The charge density has at most four humps
for $(s,p,v)\in\mathcal{E}_k$ and $\omega\in\mathcal{F}_k$ when $k\in\mathbb{Z}^+$ and $m>\omega\geq 0$;
(iii)
The NLD solitary wave with the four-hump charge density
can only appear
in the situation of higher nonlinearity, \ie $k\in\{5,6,7,\cdots\}$,
while for the case of $k\in\{1,2\}$ (resp. $k\in\{3,4\}$),
the charge density has at most two (resp. three) humps;
(iv) under the linear combination of the vector and pseudoscalar self-interactions
(\ie $v\neq 0,p\neq 0,s=0$) with $k\in\{3,4,5\cdots\}$
the charge density also has at most three humps
because setting $s=0$ in Eq.~\eqref{g1p/f1p} leads to
\[
\bmb{\frac{g_1'(\xi)}{f_1'(\xi)}}^\prime =  ((k-1)\cot^2\xi-1) \sin^{k}\xi,
\]
which has at most three zeros,
\ie $\numextr_I\bbmb{\frac{g_1^\prime}{f_1^\prime}}\leq 3$,
then $\numextr_I\bbmb{\hat{\rho}_Q^{-k}} \leq 5$ for Eqs.~\eqref{Eg1/f1} and \eqref{Zf1};
(v) The charge density can indeed have three humps or four humps as shown
in Figs.~\ref{three_hump} and \ref{four_hump},
while the two-hump charge density was first pointed out in \cite{ShaoTang2005}.
\end{Remark}

\begin{figure}[ht]
  \centering
\subfigure[~The charge density $\rho_Q(x)$.]{
\includegraphics[width=8cm, height=6cm]{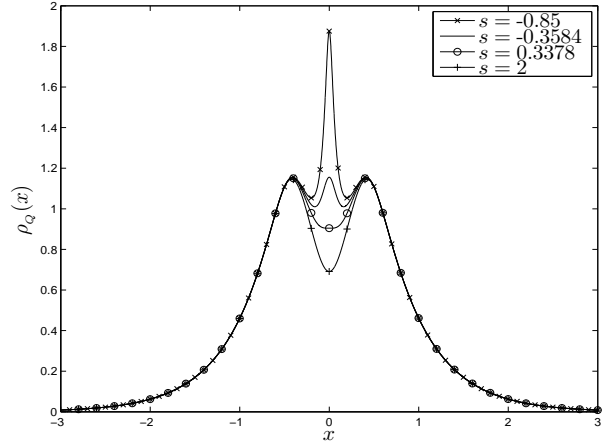}}
\subfigure[~The energy density $\rho_E(x)$.]{
\includegraphics[width=8cm, height=6cm]{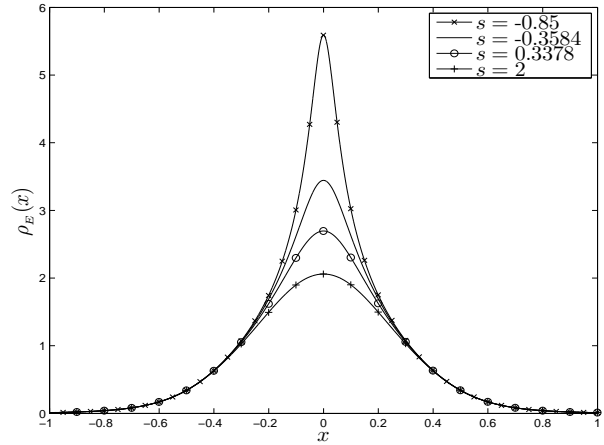}}
\caption{\small
  The two-hump and three-hump charge densities are plotted in (a) with respect to $s$
  with other parameters being $m=1,k=3,\omega=0.01,{v}=1, {p}=-0.95$.
  The critical value for the two-hump charge densities transiting to the three-hump ones is $s=\frac{25}{74}\simeq 0.3378$.
  When $s\simeq-0.3584$,
  the three peaks have the same height $1.156$ and locate at $x=0$ and $x \simeq \pm 0.4110$.
  It is noted that the energy densities with the same parameters have just one hump, see (b).
}
\label{three_hump}
\end{figure}

\begin{figure}[ht]
  \centering
\subfigure[~The charge density $\rho_Q(x)$.]{
\includegraphics[width=8cm, height=6cm]{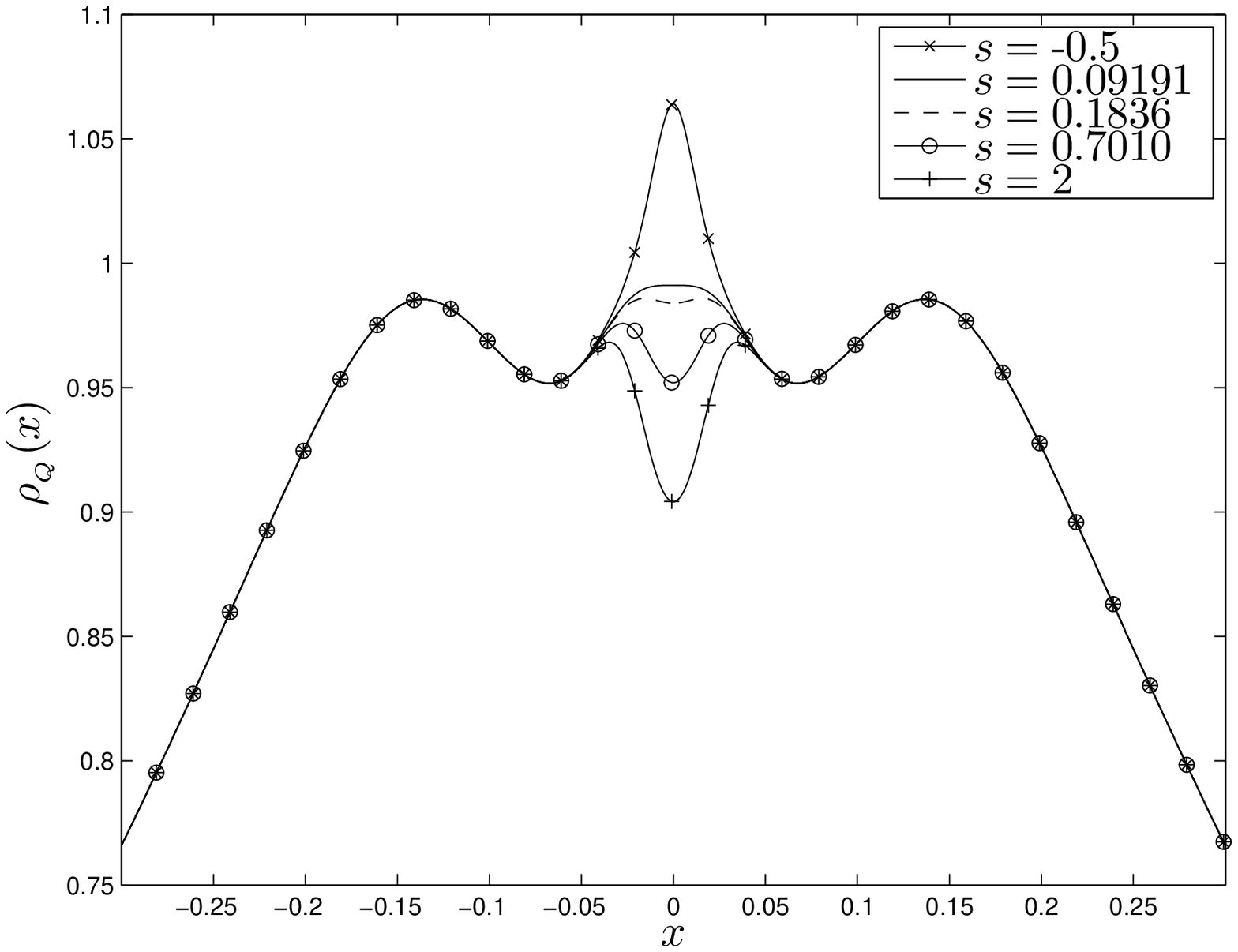}}
\subfigure[~The energy density $\rho_E(x)$.]{
\includegraphics[width=8cm, height=6cm]{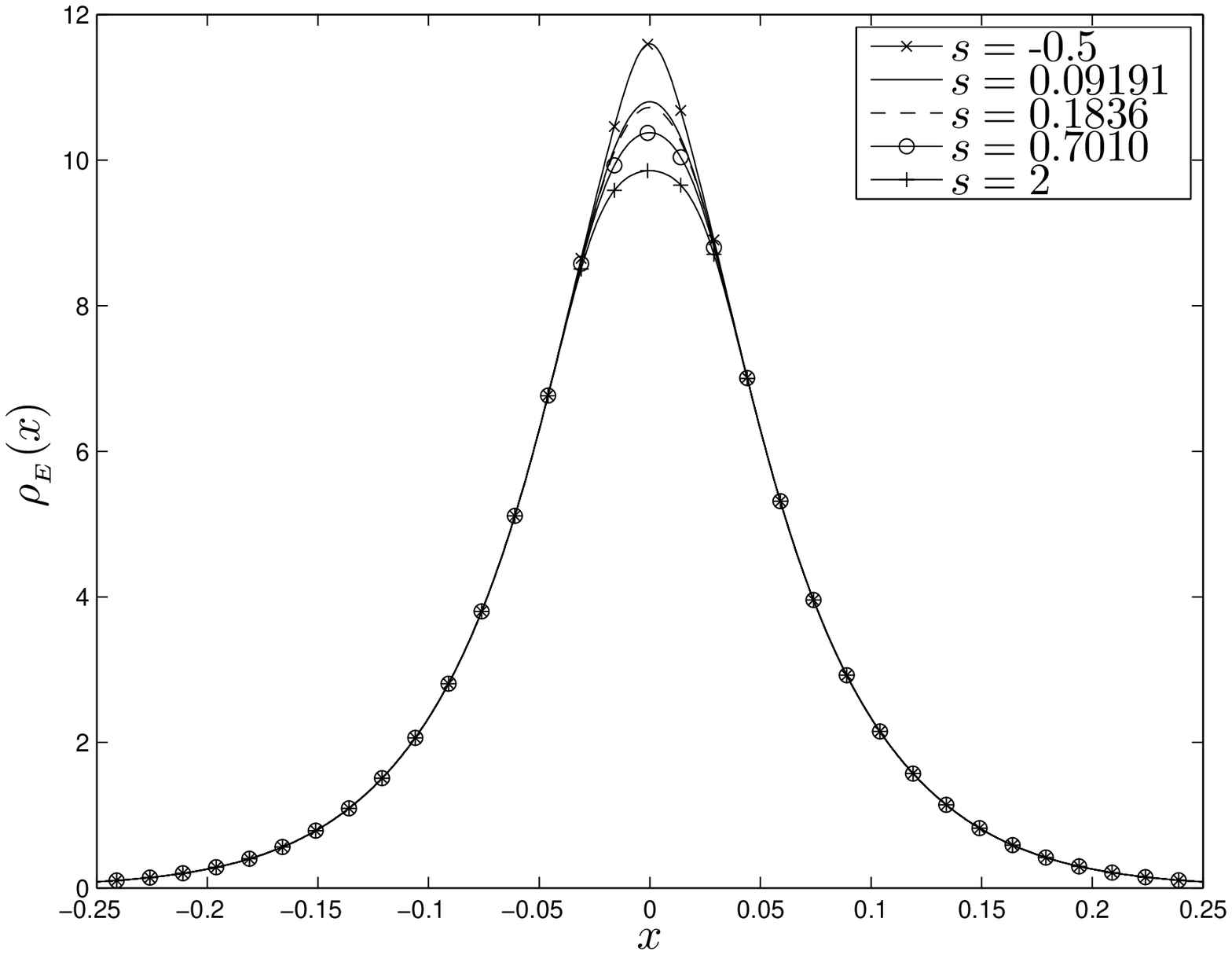}}
\caption{\small
The three-hump and four-hump charge densities are plotted in (a) with respect to $s$
  with other parameters being $m=1,k=11,\omega=0.01,{v}=1, {p}=-0.95$.
  The critical value for the three-hump charge densities transiting
  to the four-hump ones is $s=\frac{25}{272}\simeq 0.09191$.
  The four peaks have the same hight for $s\simeq 0.1836$ and
  so do the three valleys for $s\simeq 0.7010$.
  It is noted that the energy densities with the same parameters have only one hump, see (b).}
\label{four_hump}
\end{figure}

Apart from the charge $Q$ in \eqref{Q-2C},
there are another two important conservative quantities:
the energy $E$ in \eqref{E-2C} and the momentum $P$ in \eqref{P-2C}.
For the solitary wave solutions with the form in Eqs.~\eqref{localsol1} and \eqref{phaseexp},
from Eq.~\eqref{T01=0},
the momentum density $\rho_P(x):=T^{01}[\vPsi](x,t)$ in Eq.~\eqref{t_01-sw} vanishes for all $x\in\mathbb{R}$,
which reflects that the NLD solitary waves are at rest (\ie the standing waves),
while the energy density $\rho_E(x) := T^{00}[\vPsi](x,t)$ in \eqref{t_00-sw} becomes
\begin{equation}\label{T00-x}
  \rho_E(x) = \rho_Q(x)(m k\cos(2\theta(x)) -(k-1)\omega)>0, \quad \forall x\in\mathbb{R},
\end{equation}
for Eqs.~\eqref{key-Li}, \eqref{key-eq1}, \eqref{localsol1} and \eqref{phaseexp}.
Next,
we are going to investigate the multi-hump structure of the energy density $\rho_E(x)$
and the method is similar to that used in discussing the multi-hump structure of
the charge density $\rho_Q(x)$.

Rewrite $\rho_E(x)$ in terms of the intermediate variable $\xi$ into
\begin{equation}
  \label{eq:t00-xi}
  \hat{\rho}_E(\xi) = mk \hat{\rho}_Q(\xi) (\cos\xi -b)>0,\quad \forall\xi\in I,
\end{equation}
where $b = \frac{k-1}k a\leq a$
and Eq.~\eqref{j-xi} is applied,
and the number of extreme points
of $\hat{\rho}_E^{-k}(\xi)$ in
the interval $I$, \ie $\numextr_I\bbmb{\hat{\rho}_E^{-k}}$,
is to be estimated. Because
  \begin{equation}\label{T00-kxi}
    \hat{\rho}_E^{-k}(\xi) =
{\frac1{m^{k+1}k^k}}\frac{\hat{G}(\xi)}{(\cos\xi - a)(\cos\xi - b)^k}
={\frac1{m^{k+1}k^k}}\frac{g_1(\xi)}{h_1(\xi)},
  \end{equation}
for Eqs.~\eqref{j0-kxi} and \eqref{eq:t00-xi},
where $h_1(\xi) := (\cos\xi - a)(\cos\xi - b)^k$,
using Lemma \ref{lemma1}(ii) and Lemma \ref{le:fOverg} gives
\begin{equation}\label{Eg4/f4}
\numextr_I\bbmb{\hat{\rho}_E^{-k}} = \numextr_I\bbmb{{\frac1{m^{k+1}k^k}} \frac{g_1}{h_1}} = \numextr_I\bbmb{\frac{g_1}{h_1}}
\leq \numextr_I \bbmb{\frac{g_1^\prime}{h_1^\prime}}+\numzero_I[h_1^\prime]+1.
\end{equation}
Direct calculation shows
\begin{equation}\label{h1p}
    h_1'(\xi)  =-\sin\xi (\cos\xi-b)^{k-1}(k+1)(\cos\xi - c),
\end{equation}
where $c=\tfrac{ka+b}{k+1}=\tfrac{k(k+1)-1}{k(k+1)}a$
and $\cos\xi - b \geq \cos\xi - c \geq \cos\xi- a>0$ holds for all $\xi\in I$,
thus
\begin{equation}\label{Zf4}
\numzero_I[h_1^\prime] = 1.
\end{equation}
Combining Eqs.~\eqref{g1p}, \eqref{g1p/f1p} and \eqref{h1p} leads to
\begin{equation}\label{g1p/h1p}
    \frac{g_1'(\xi)}{h_1'(\xi)} = \frac{s - p \tan^{k-1}\xi}{(1- \tfrac{b}
    {\cos\xi})^{k-1}(1 - \tfrac{c}{\cos\xi})}  = \frac{g_2(\xi)}{h_2(\xi)},
\end{equation}
where $h_2(\xi):={(1- \tfrac{b}{\cos\xi})^{k-1}(1 - \tfrac{c}{\cos\xi})}$,
and implies by Lemma \ref{le:fOverg} that
\begin{equation}\label{Eg4p/f4p}
\numextr_I\bbmb{\frac{g_1^\prime}{h_1^\prime}}
=\numextr_I\bbmb{\frac{g_2}{h_2}}\leq \numextr_I\bbmb{\frac{g_2^\prime}{h_2^\prime}}
+\numzero_I[h_2^\prime]+1.
\end{equation}
It can easily be shown that
  \begin{equation} \label{h2p}
    h_2'(\xi) = -(1- \tfrac{b}{\cos\xi})^{k-2} \tfrac{\sin\xi}{\cos^2\xi}
    \bmb{b(k-1)(1 - \tfrac{c}{\cos\xi})+ c(1-\tfrac{b}{\cos\xi}) },
  \end{equation}
thus we have
\begin{equation}
\numzero_I[h_2^\prime] = 1, \quad \text{if}\;\; b\neq 0, \label{Zf5}
\end{equation}
and
\begin{equation}\label{g2p/h2p}
\frac{g_2'(\xi)}{h_2'(\xi)}
     = p \frac{(k-1)\sin^{k-3}\xi} {({\cos\xi}-b)^{k-2}
    \bmb{b(k-1)(1 - \tfrac{c}
        {\cos\xi})+ c(1-\tfrac{b}{\cos\xi})}},
\end{equation}
for Eq.~\eqref{g2p}. Hence we are able to determine the number of humps of the energy density $\rho_E(x)$
and the results are shown below.

\noindent $\bullet$ \textbf{Case E1} When $k\in\mathbb{Z}^+$ and $p=\omega=0$,
Eq.~\eqref{T00-kxi} becomes
  \begin{align*}
    \hat{\rho}_E^{-k}(\xi) = \frac1{m^{k+1}k^k}
    \bmb{s+\frac{v}{\cos^{k+1}\xi}},
  \end{align*}
from which it can readily be seen that the energy density has only one hump.

\noindent $\bullet$ \textbf{Case E2}  When $k\in\mathbb{Z}^+$, $p=0$ and $\omega\neq 0$,
we have $0<\omega<m$, $a>0$, $b>0$, and Eq.~\eqref{T00-kxi} becomes
  \begin{equation}\label{T00-kxi-1}
    \hat{\rho}_E^{-k}(\xi)
    = \frac1{m^{k+1}k^k}\frac{q_1(\xi)}{q_2(\xi)},
  \end{equation}
  where $q_1(\xi):={s + \tfrac{v}{\cos^{k+1}\xi}}$ and $q_2(\xi):={(1 -
      \tfrac{a}{\cos\xi})(1 -\frac{b}{\cos\xi})^k}$.
  It is easy to show that both $q_1(\xi)$ and $q_2(\xi)$ are even and positive in
  the domain $I$. In fact, we can further show that
  $\frac{q_1(\xi)}{q_2(\xi)}$ increases monotonously as $\xi$ goes from $0$ to $\cos^{-1}a$
  which implies that the energy density has only one hump in this situation.
  The reason is given in the following.
  %where we denote the half of $I$ by $I_1$, \ie $I_1:=[0,\cos^{-1}a)$.
  The case of $v\geq 0$ is trivial. If $v<0$,
  then we have $s>0$ for $\hat{G}(0)=s + v>0$ because $\omega\in\mathcal{F}_k$,
  and using the formula for difference of powers leads to
  \begin{equation}\label{q1}
  q_1(\xi)= \bmb{\rho -\frac{\eta}{\cos\xi}}
      \sum_{j=0}^{k}\rho ^j \bmb{\frac{\eta}{\cos\xi}}^{k-j},
  \end{equation}
  where $\rho := s^{\frac1{k+1}}> 0$ and $\eta := (-v)^{\frac1{k+1}}>0$.
  It is simple to see that ${\rho -\frac{\eta}{\cos\xi}}>0$ holds for all $x\in [0,\cos^{-1}a)$
  and the limit gives ${a} \rho-{\eta} \geq0$ when $\xi\rightarrow \cos^{-1}a$.
  Combining Eqs.~\eqref{T00-kxi-1} and \eqref{q1} yields
  \begin{equation}\label{q1/q2}
      \frac{q_1(\xi)}{q_2(\xi)}
      = \bmb{\frac{a \rho - \eta}{a(1 - \tfrac{a}{\cos\xi})(1
          -\frac{b}{\cos\xi})^k} + \frac{\eta}{a(1
          -\frac{b}{\cos\xi})^k} }\sum_{j=0}^{k}\rho ^j
      \bmb{\frac{\eta}{\cos\xi}}^{k-j},
    \end{equation}
  where the identity $\rho -\frac{\eta}{\cos\xi} =
    \rho - \frac{\eta}{a} + \frac{\eta}{a}\bmb{1
      -\frac{a}{\cos\xi}}$ is applied.
  From Eq.~\eqref{q1/q2}, it is trivial to see that
  $\frac{q_1(\xi)}{q_2(\xi)}$ increases monotonously in the domain $[0,\cos^{-1}a)$.

\noindent $\bullet$ \textbf{Case E3}  When $k=1$ and $p\neq 0$,
we have
$$
\hat{\rho}_E^{-1} ={\frac1{m^{2}}}\frac{ (s-p) + \tfrac{v+p}{\cos^{2}\xi}}{(1 - \tfrac{a}{\cos\xi})(1 - \tfrac{b}{\cos\xi})}
$$
and thus the energy density also has only one hump
by utilizing an argument similar to that used in the situation of $k\in\mathbb{Z}^+$ and $p\neq 0$
(see above \textbf{Case E1} and \textbf{Case E2}).

\noindent $\bullet$ \textbf{Case E4}  When $k\in\{2,3,4,\cdots\}$, $p\neq 0$ and $\omega=0$,
we have $a=b=0$, and Eq.~\eqref{g1p/h1p} becomes
\begin{equation*}
 \frac{g_1'(\xi)}{h_1'(\xi)}= s - p \tan^{k-1}\xi,
\end{equation*}
which implies that
$\numextr_I\bbmb{\frac{g_1'}{h_1'}} \leq 1$
and thus $\numextr_I\bbmb{\hat{\rho}_E^{-k}} \leq 3$ for Eqs.~\eqref{Eg4/f4} and \eqref{Zf4}.
That is, the energy density  has at most two humps.

\noindent $\bullet$ \textbf{Case E5}  When $k=2$, $p\neq 0$ and $\omega\neq 0$,
Eq.~\eqref{g1p/h1p} becomes
\begin{equation}
\frac{1}{p} \frac{g_2(\xi)}{h_2(\xi)} = \frac{\tfrac{s}{p} -  \tan\xi}{(1- \tfrac{b}
    {\cos\xi})(1 - \tfrac{c}{\cos\xi})},
\end{equation}
from which it is easy to see that $\frac{1}{p} \frac{g_2(\xi)}{h_2(\xi)}$
decreases monotonously as $\xi$ goes from
$-\cos^{-1}a$ to $\cos^{-1}a$ when $s=0$
or as $\xi$ goes from
$-\cos^{-1}a$ to $0$ when $\frac{s}{p}>0$,
and $\xi=0$ is not the extreme point of $\frac{1}{p} \frac{g_2(\xi)}{h_2(\xi)}$
for $\left(\frac{1}{p} \frac{g_2(\xi)}{h_2(\xi)}\right)^\prime_{\xi=0}=-\frac{1}{(1-b)(1-c)}<0$.
That is, $\numextr_I\bbmb{\frac{g_2}{h_2}}=0$ holds for $s=0$
and
$\numextr_I\bbmb{\frac{g_2}{h_2}}=\numextr_{I_1}\bbmb{\frac{g_2}{h_2}}$
is true for $\frac{s}{p}>0$ where $I_1=(0,\cos^{-1}a)$.
When $\frac{s}{p}>0$, using Lemma \ref{le:fOverg} further gives
\begin{equation}
\numextr_I\bbmb{\frac{g_2}{h_2}}=\numextr_{I_1}\bbmb{ \frac{g_2}{h_2}} \leq
\numextr_{I_1}\bbmb{ \frac{g_2^\prime}{h_2^\prime}} + \numzero_{I_1}\bbmb{h_2^\prime} +1 \leq 2,
\end{equation}
where in the last inequality we have used $\numzero_{I_1}\bbmb{h_2^\prime}=0$ for Eq.~\eqref{h2p} as well
as $\numextr_{I_1}\bbmb{ \frac{g_2^\prime}{h_2^\prime}}\leq 1$ for
\begin{equation*}
  \bmb{\frac{g_2'(\xi)}{h_2'(\xi)}}^\prime = \frac{p(2bc - (b+c)\cos^3\xi)}{(h_2^\prime(\xi)\cos^3\xi)^2}.
\end{equation*}
By a similar argument, we can easily show that $\numextr_I\bbmb{\frac{g_2}{h_2}}\leq 2$ also holds for
$\frac{s}{p}<0$. Therefore, $\numextr_I\bbmb{\frac{g_2}{h_2}}\leq 2$ is always true
and thus we have $\numextr_I\bbmb{\hat{\rho}_E^{-k}}\leq 4$
for Eqs.~\eqref{Eg4/f4}, \eqref{Zf4} and \eqref{g1p/h1p},
which means that the energy density has at most two humps.

\noindent $\bullet$ \textbf{Case E6}  When $k\in\{3,5,7,\cdots\}$, $p\neq 0$ and $\omega\neq 0$,
from Eq.~\eqref{g2p/h2p}, we find that
$\frac1p \frac{g_2'(\xi)}{ h_2'(\xi)}$ is even and
increases monotonously as $\xi$ goes from $0$ to $\cos^{-1}a$,
then $\numextr_I\bbmb{\frac{g_2'}{h_2'}}=1$
and $
\numextr_I\bbmb{\hat{\rho}_E^{-k}} \leq 5
$
for Eqs.~\eqref{Eg4/f4}, \eqref{Zf4}, \eqref{Eg4p/f4p}
and \eqref{Zf5},
which means that the energy density has at most three humps.

\noindent $\bullet$ \textbf{Case E7}  When $k\in\{4,6,8,\cdots\}$, $p\neq 0$ and $\omega\neq 0$,
from Eq.~\eqref{g2p/h2p}, we find that
$\frac1p \frac{g_2'(\xi)}{ h_2'(\xi)}$ is odd and
increases monotonously as $\xi$ goes from $0$ to $\cos^{-1}a$,
then $\numextr_I\bbmb{\frac{g_2'}{h_2'}} =0$
and
$
\numextr_I\bbmb{\hat{\rho}_E^{-k}} \leq 4
$
for Eqs.~\eqref{Eg4/f4}, \eqref{Zf4}, \eqref{Eg4p/f4p}
and \eqref{Zf5},
which means that the energy density has at most two humps.

\begin{Remark}\rm
Our analysis has shown that:
(i) the energy density has only one hump under
the linear combination of the scalar and vector self-interactions;
(ii)
The energy density has at most three humps
for $(s,p,v)\in\mathcal{E}_k$ and $\omega\in\mathcal{F}_k$ when $k\in\mathbb{Z}^+$ and $m>\omega\geq 0$;
(iii)
 The NLD solitary wave with the three-hump energy density can only appear
 in the situation of higher nonlinearity of  even power, \ie $k \in\{ 3,5,7,\cdots\}$,
 while for the case of $k\in\{2, 4,6,\cdots\}$,
 the energy density has at most two humps;
(iv) under the linear combination of the vector and pseudoscalar self-interactions
(\ie $v\neq 0,p\neq 0,s=0$) with $k\in\{3,5,7\cdots\}$,
the energy density also has at most two humps
because setting $s=0$ in Eq.~\eqref{g1p/h1p} leads to
\[
-\frac{1}{p}\frac{g_1'(\xi)}{h_1'(\xi)} =  \frac{\tan^{k-1}\xi}{(1- \tfrac{b}
    {\cos\xi})^{k-1}(1 - \tfrac{c}{\cos\xi})},
\]
which is even and increases monotonously as $\xi$ goes from $0$ to $\cos^{-1}a$,
then $\numextr_I\bbmb{\frac{g_1'}{h_1'}}=1$
and $
\numextr_I\bbmb{\hat{\rho}_E^{-k}} \leq 3
$
for Eqs.~\eqref{Eg4/f4} and \eqref{Zf4};
 (v) The energy density can indeed have two humps or three humps as shown
in Fig.~\ref{T00_three_hump}.
\end{Remark}

\begin{figure}[h]
  \centering
\subfigure[~The energy density $\rho_E(x)$.]{
   \includegraphics[width=8cm, height=6cm]{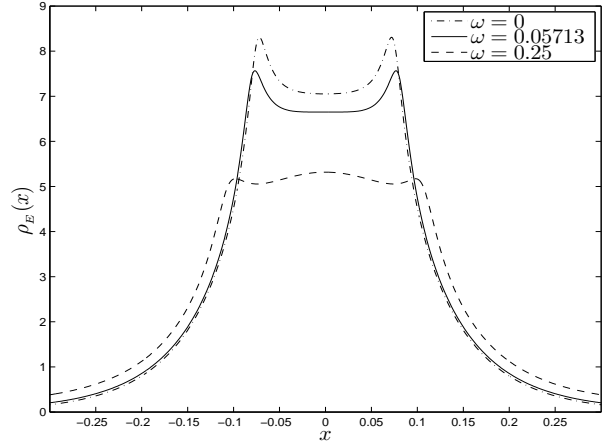}}
\subfigure[~The charge density $\rho_Q(x)$.]{
   \includegraphics[width=8cm, height=6cm]{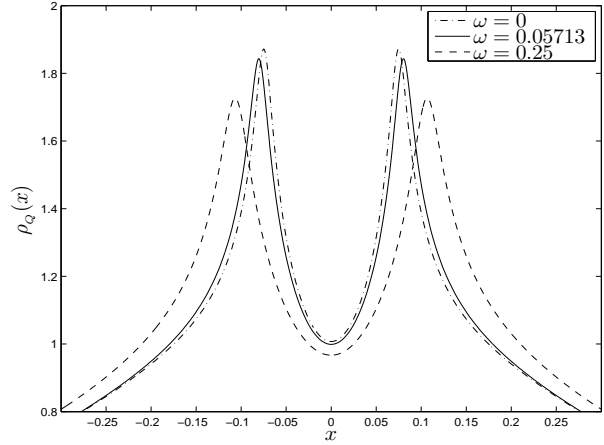}}
\caption{\small
  The two-hump and three-hump energy densities are plotted in (a) with respect to $\omega$
  with other parameters being $m=1$, $s=1$, ${p}=0.25$, $v=-0.05$, $k=7$.
  The critical value for the two-hump energy densities transiting to the three-hump ones is
 $\omega= \frac{1035-\sqrt{856185}}{1920}\simeq 0.05713$.
  It is noted that the charge densities with the same parameters have only two humps, see (b).
}
\label{T00_three_hump}
\end{figure}

All we have discussed above is about the standing wave
(\ie the velocity $V=0$)
from which we can obtain the moving wave (\ie $0< V <1$)
by the Lorentz boost (see Eqs.~\eqref{eq:movingwave} and \eqref{eq:boost2})
in terms of variable of $\phi$ as follows
\begin{equation} \label{eq:boost1}
\tanh \phi = V, \;\; \cosh\phi=\gamma,
\;\; \sinh\phi = \gamma V, \;\;
\cosh\frac{\phi}2 = \sqrt{\frac{\gamma+1}{2}}, \;\;
\sinh\frac{\phi}2 = \sqrt{\frac{\gamma-1}{2}},
\end{equation}
where $\gamma := \tfrac{1}{\sqrt{1-V^2}}$ is the Lorentz factor.
The resulting relation between the right moving wave denoted by $\vPsi^\text{mw}$ and the standing wave denoted by $\vPsi^\text{sw}$ in
Eqs.~\eqref{localsol1} and \eqref{phaseexp} reads \cite{ShaoTang2005,XuShaoTang2013}
\begin{equation}\label{eq:movingwave}
\vPsi^{\text{mw}}(x, t) =
\vS
 \vPsi^{\text{sw}}(\tilde{x},\tilde{t}),
 \quad
  \vS := \bmb{    \begin{matrix}
      \cosh\frac{\phi}2 & \sinh\frac{\phi}2 \\ \sinh\frac{\phi}2 & \cosh\frac{\phi}2
    \end{matrix} },
\end{equation}
where
\begin{equation}
  \label{eq:boost2}
  \bmb{
  \begin{matrix}
  t \\
  x
  \end{matrix}
  }
  =
  \vLambda
  \bmb{
  \begin{matrix}
  \tilde{t} \\
  \tilde{x}
  \end{matrix}
  },
  \quad
  \vLambda := \bmb{
    \begin{matrix}
      \cosh \phi & \sinh \phi \\ \sinh \phi & \cosh \phi
    \end{matrix}},
\end{equation}
is the so-called Lorentz transformation
between the moving frame $(x,t)$ and the rest frame $(\tilde{x},\tilde{t})$.
Combining Eqs.~\eqref{jvector}, \eqref{localsol1}, \eqref{imag} and \eqref{eq:movingwave}
yields
\begin{equation}
j^0[\vPsi^{\text{mw}}](x, t) = \gamma j^0[\vPsi^{\text{sw}}](\tilde{x},\tilde{t}). \label{j0-mw}
\end{equation}
Moreover, it is straightforward to show that
\[
\partial^\mu = \Lambda^\mu_{\tilde{\mu}}\partial^{\tilde{\mu}},  \;\;
\vS^\dag \vgamma^0 = \vgamma^0 \vS^{-1}, \;\;
\vS^{-1} \vgamma^\mu \vS = \Lambda^\mu_{\tilde{\mu}}\vgamma^{\tilde{\mu}}, \;\;
\eta^{\mu\nu} = \Lambda^\mu_{\tilde{\mu}}\Lambda^\nu_{\tilde{\nu}} \eta^{\tilde{\mu}\tilde{\nu}},  \;\;
\]
where $\Lambda^\mu_{\tilde{\mu}}$ is the $(\mu,\tilde{\mu})$ entry of $\vLambda$ in Eq.~\eqref{eq:boost2},
and then
\begin{equation}
\label{eq:relation-tensor}
  T^{\mu\nu}[\vPsi^{\text{mw}}](x,t) = \Lambda^\mu_{\tilde{\mu}} \Lambda^\nu_{\tilde{\nu}}
  T^{\tilde{\mu}\tilde{\nu}}[\vPsi^{\text{sw}}](\tilde{x},\tilde{t}),
\end{equation}
from which we can readily verify
\begin{align}
T^{00}[\vPsi^{\text{mw}}](x, t) &= \gamma^2 T^{00}[\vPsi^{\text{sw}}](\tilde{x},\tilde{t}), \label{T00-mw}\\
T^{01}[\vPsi^{\text{mw}}](x,t) &= V \gamma^2 T^{00}[\vPsi^{\text{sw}}](\tilde{x},\tilde{t}), \label{T01-mw}
\end{align}
for Eqs.~\eqref{t00=t11=0}, \eqref{T01=0} and \eqref{eq:boost1}.
It is easy to see that,
Eq.~\eqref{j0-mw} (resp. \eqref{T00-mw}) implies
the charge (resp. energy) density for $\vPsi^{\text{mw}}$
has the same multi-hump structure as that for $\vPsi^{\text{sw}}$,
while the momentum density for $\vPsi^{\text{mw}}$
has the same multi-hump structure as the energy density for Eq.~\eqref{T01-mw}.

\subsection{$k=1$ and $\omega=m>0$}
\label{sec:hump:w=m>0}

When $k=1$, $\omega = m>0$ and
$(s,{p},{v})\in \mathcal{E}_1^-$,
the profile of the charge density $\rho_Q(x)$ has either one hump or two humps.
The reason is shown as follows. Recall from the discussion in Section \ref{sec:hump:m>w>=0} that
the number of humps in the charge density $\rho_Q(x)$
can be determined by the number of zeros of $\dd{\rho_Q(x)}{x}$ which has the form
\begin{equation}\label{dj0(x)_m=w}
  \dd{\rho_Q(x)}{x} =
  \frac{16 m^3 x }{(G(x))^2} \left(s+{v}- \frac{4 (s-{p})}{(4m^2x^2+1)^2}\right),
\end{equation}
for Eq.~\eqref{eq:R:w=m}.
From Eq.~\eqref{dj0(x)_m=w},
it is easy to see that
the charge density has three extreme points at $x=0,x=\pm
\frac{\sqrt{\sqrt{\frac{4(s-{p})}{s+{v}}}-1}}{2m}$ (\ie two humps) if $3s-4p - v < 0$,
otherwise has only one hump at $x=0$.
As for the energy density, combining Eqs.~\eqref{eq:cos-m=w}~\eqref{eq:R:w=m} and \eqref{T00-x}
leads to
\begin{equation*}
  \rho_E(x) = \frac{2m^2 (1-(2mx)^2)}{(s-p)((2mx)^2-1)^2+(v+p)((2mx)^2+1)^2},
\end{equation*}
then
\begin{equation*}
  \dd{\rho_E(x)}{x} = \frac{16 m^4 x \bmb{(s+{v})(4m^2x^2-1)^2- 4 (v+p)}}{\bmb{(s-p)((2mx)^2-1)^2+(v+p)((2mx)^2+1)^2}^2},
\end{equation*}
from which we have that
the energy density  $\rho_E(x)$
has two humps
at $x = \pm\frac{\sqrt{1-2\sqrt{\tfrac{v+p}{s+v}}}}{2m}$ if $3v+4p-s>0$,
otherwise has only one hump at $x=0$.
From Eqs.~\eqref{j0-mw} and \eqref{T00-mw},
we have the charge or energy density for the moving wave
has the same multi-hump structure as that for the standing wave as shown above.
According to Eq.~\eqref{T01-mw},
the momentum density for the moving wave
also has the same multi-hump structure as the energy density.

\section{Conclusion}
\label{sec:conclusion}

In this study, the NLD solitary waves under the linear combined self-interaction
to the power of the integer $k+1$ have been analytically derived and
the multi-hump structure in the charge, energy and momentum densities
has been rigorously analyzed.
We have proved that
for a given integer $k$,
the number of solitary humps for the charge density is bounded above with $4$,
while that for the energy density is bounded above with $3$.
Besides the two-hump structure first reported in \cite{ShaoTang2005},
the three-hump and four-hump charge densities have been observed.
We have also proved that
the four-hump charge density can only exist in the situation of higher nonlinearity,
\ie $k\in\{5,6,7,\cdots\}$,
while the three-hump one can appear in the situation of $k\in\{3,4,5,\cdots\}$.
The three-hump energy density which can only occur in the
situation of $k\in\{3,5,7,\cdots\}$ has also been pointed out.
It has been shown that the momentum density has
the same multi-hump structure as the energy density.
Our analysis has further revealed that,
the linear combined self-interaction in which
$p\neq 0$
as well as at least one of $s,v$ is not zero
is crucial for generating more than two humps (resp. one hump) in the charge (resp. energy) density.
Actually,
under the pure scalar self-interaction (\ie $s\neq 0$, $p=v=0$),
the charge density can be either one-hump or two-hump
while the energy density can only be one-hump;
under the pure vector self-interaction (\ie $v\neq 0$, $s=p=0$),
both the charge density and the energy density have only one hump;
under the linear combination of the scalar and vector self-interactions (\ie $s\neq 0,v\neq 0$, $p=0$),
the charge density can be either one-hump or two-hump
while the energy density can only be one-hump;
no physical solutions exist under the pure pseudoscalar self-interaction (\ie $p\neq 0$, $s=v=0$).
In addition, when $k=1$ and $\omega=m>0$, the NLD solitary wave with polynomial decay exists
and to our knowledge,
it has not been reported before this work.

\section*{Acknowledgments}
The research of this work was partly supported by grants from the
National Natural Science Foundation of China (Project Nos. 10925101,
11101011 and 91330110) and the Specialized Research Fund for the
Doctoral Program of Higher Education (Project No. 20110001120112).

%\bibliographystyle{elsarticle-num}
%\bibliography{Z:/Reference/journalname,Z:/Reference/nonlinearDirac}

\end{document}